\theoremstyle{definition}
\newtheorem{definition}{Definition}[section]
\newtheorem{prop}{Proposition}
\newtheorem{theorem}{Theorem}[section]
\newtheorem{lemma}[theorem]{Lemma}
\newcommand*{\rom}[1]{\expandafter\@slowromancap\romannumeral #1@}
\begin{document}

\preprint{APS/123-QED}

\title{Multiparty Entanglement Routing in Quantum Networks}

\author{Vaisakh Mannalath}\email{vaisakhmannalath@gmail.com}

\author{Anirban Pathak}
 \email{anirban.pathak@jiit.ac.in}
\affiliation{%
 Department of Physics and Materials Science \& Engineering, Jaypee Institute of Information Technology, A-10, Sector 62, Noida, UP-201309, India
  }%

\date{\today}

\begin{abstract}
Distributing entanglement among multiple users is a fundamental problem in quantum networks, requiring an efficient solution. In this work, a protocol is proposed for extracting maximally entangled (GHZn) states for any number of parties in quantum networks of arbitrary topology. It is based on the graph state formalism and requires minimal assumptions on the network state. The protocol only requires local measurements at the network nodes and just a single qubit memory per user. Existing protocols on bipartite entanglement routing are also improved for specific nearest-neighbor network architectures. To this end, the concept of majorization is utilized to establish a hierarchy among different paths in a network based on their efficacy. This approach utilizes the symmetry of the underlying graph state to obtain better-performing algorithms.

\end{abstract}


\maketitle

\section{\label{sec:Introduction}Introduction}Point-to-point secure quantum communication has been achieved quite successfully with optical fibers   \cite{CZL+21}, and in free space,   \cite{ACS+21}. However, direct quantum communication is limited by errors and losses incurred during transmission   \cite{PLO+17}. In order to overcome these inherent limitations and to establish connections over long distances, approaches based on entanglement swapping are employed   \cite{SSR+11, ATL15}. Over recent years, a set of such protocols have been experimentally implemented with increasing distances, and success rates   \cite{YCZ+08,LZY+19,LTM+21}. In fact, with the help of a satellite, intercontinental quantum communication has already been performed between China and Austria   \cite{LCH+18}. Apart from pushing the boundaries of what is possible regarding maximum distance   \cite{CZL+22} and communication rates   \cite{CZC+21}, a natural step forward would be to consider multiparty scenarios. Naturally, the community hopes to develop more complicated networks involving multiple nodes leading to a quantum internet   \cite{S17}. Many protocols in the multiparty scenario have been proposed for tasks like quantum secret sharing    \cite{HBB99,GKB+07,TZG01,XLD+04}, quantum voting   \cite{HZB+06,TSP16,MTP+22}, and quantum conference key agreement   \cite{MGK+20,PHG+21, HJP20,BTS+18}. The study of such quantum network protocols is an active field with promising applications. 

A fundamental requirement for realizing a quantum internet is to develop algorithms for managing the entanglement present in the network and, thus, to distribute entangled states among two or more specific nodes (users)   \cite{LLY+19,MMG19,BSK+20}. This easy-to-state algorithm designing problem is fundamental and challenging, and under the chosen condition, it leads to a set of related problems of interest. For example, in Ref.   \cite{DHW20}, the authors investigated whether a given graph state can be transformed into a set of Bell states between specific network nodes using operations restricted to single-qubit Clifford operations, single-qubit Pauli measurements, and classical communication. They showed that this specific problem is \texttt{NP-Complete}. This result highlights the difficulty of the problem at hand and the crucial need to devise better-performing protocols, at least for some specific instances relevant to multipartite schemes.

Before we proceed further, it would be apt to note that the effectiveness of quantum networks in performing tasks unachievable in the classical world (in classical networks) is not restricted to secure quantum communication only. Quantum network is also helpful in distributed quantum computing   \cite{BR03,LBK05}, clock synchronization  \cite{C00,JAD+00, GLM01,NSN+22}, and many other applications. Most of these applications require the ability to distribute entanglement among two or more network nodes located far away. Motivated by this, a few schemes for generating entanglement among specific network nodes have been proposed in the recent past   \cite{LLC21,MBF10,PPE+22,PKT+19}, and some of them have been implemented experimentally  \cite{HKM+18,KRH+17,ZCB+21,SNN+20}. Interestingly, the optimal or maximally efficient protocol for remote entanglement generation in a quantum network has yet to be discovered. This motivated us to look at the possibility of designing a more efficient algorithm for the entanglement distribution among the nodes of a network.\\

A helpful tool used in the study of quantum networks is the notion of graph states   \cite{S03,HEB04}. They have been employed to realize several tasks in quantum information processing, including quantum metrology   \cite{SM20}, quantum error correcting codes   \cite{SW01} and one-way quantum computing \cite{RB01}. Furthermore, a strong interplay between the graph theory and quantum entanglement is known, and the same has been investigated from various perspectives   \cite{HEB04,SS19,AMD+20}. Graph states are generated in a network when the nodes, sharing maximally entangled pairs with nearby nodes, perform suitable entanglement-generating operations locally. Graph states have been studied extensively in the context of quantum networks  \cite{EKB16,PD19}, with much of the research focused on generating them in a quantum network with varying assumptions   \cite{MMG19,CC12,PWD18}.

A general method for extracting maximally entangled states with two or three parties in connected networks was presented in    \cite{HPE19}. They consider manipulating an already generated graph state to accommodate future communication requests. Compared to other methods, such as the algorithm described in  \cite{SMI+16} which requires large amounts of quantum memories, this approach was more advantageous regarding the memory required for the repeater stations. In the graph state formalism,  the maximally entangled states shared by two (three) parties are represented by line graphs, with two (three) vertices, up to some local operations. Hence, to establish a maximally entangled state between two (three) nodes, we could perform sequential entanglement swapping measurements on a path connecting all two (three) nodes. The protocol proposed in   \cite{HPE19} is quite similar to such repeater-based protocols, albeit more efficient. However, this simple approach will not work for four or more nodes since the corresponding graphs are not line graphs \cite{HEB04}. Moreover, many paths can connect the two nodes, even for the simplest two-party scenario. The protocol in  \cite{HPE19} does not provide a way to evaluate these different possibilities and pick the right one.

In our work, we define a new protocol for extracting maximally entangled states for any number of parties. The protocol only requires local measurements performed by the network users with access to a single qubit memory. In order to achieve this, we extensively use graph-theoretic tools in the graph state formalism of quantum networks. Moreover, we improve upon the results of \emph{Ref.}   \cite{HPE19} by providing a more efficient routine for establishing connections between distant nodes of a network. We use the concept of majorization   \cite{MOA79} to establish a hierarchy among different paths in a network based on their efficacy. This concept utilizes the symmetry of the underlying graph state to obtain better-performing algorithms.

The rest of the paper is organized as follows; in Section \ref{sec:prelim}, we briefly introduce graph states and graph theory tools that we use.  
Then, in Section \ref{sec:ghztstate}, we state and prove the theorem concerning multipartite state generation and demonstrate several examples. In Section \ref{sec:gridgraphs}, we consider a class of nearest neighbour graphs and improve upon the existing entanglement routing protocols for those specific cases. In Section \ref{sec:discussion} we conclude with our remarks and possible future research questions.\\

\begin{figure*}
    \centering
    \includegraphics[width=0.7\linewidth]{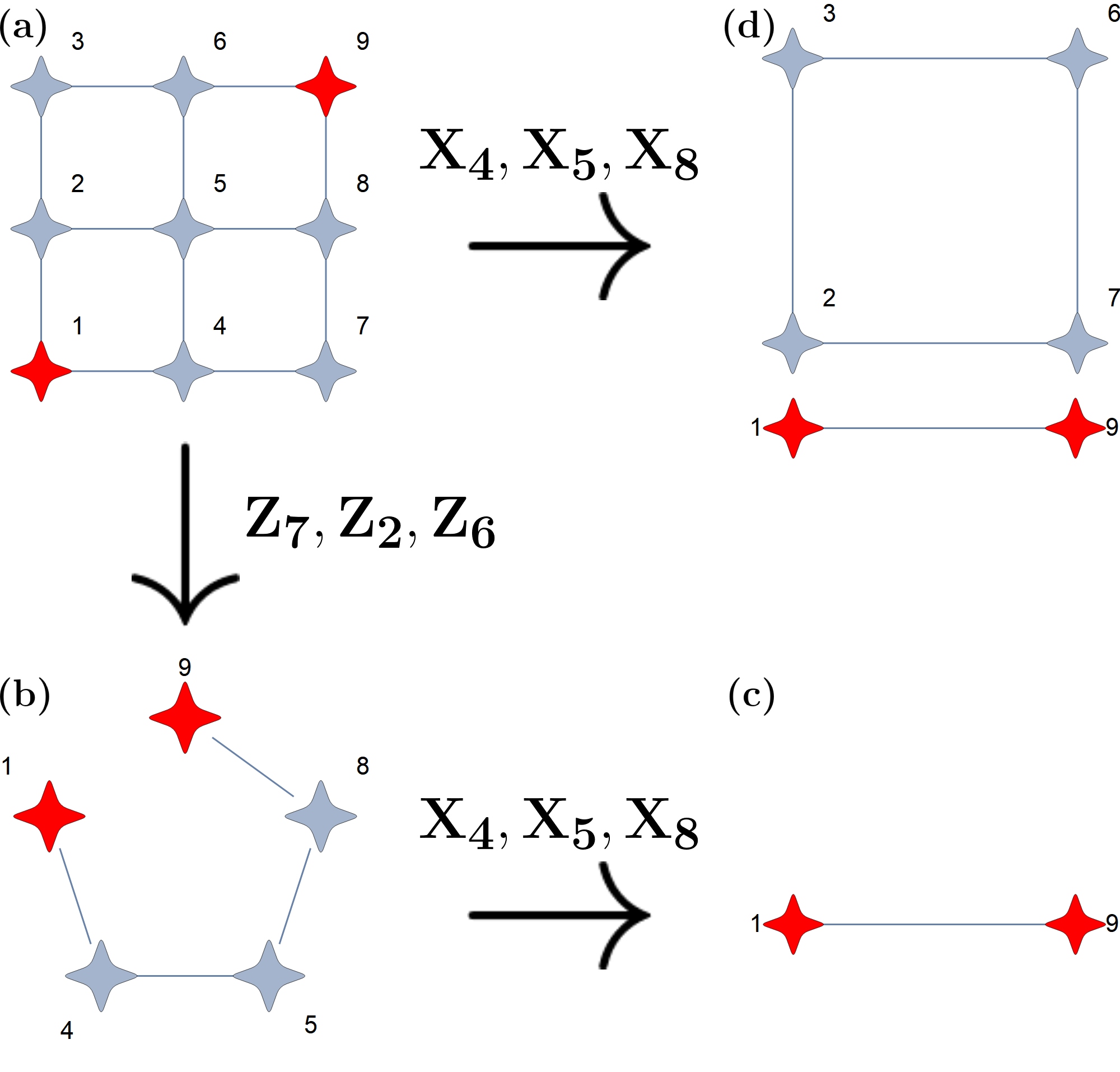}
    \caption{(Color online) Repeater protocol and $X$ protocol   \cite{HPE19}
    \textbf{(a)}$3\times3$ square grid. The Objective is to establish a Bell pair between the vertices $1,9$. \textbf{(b)-(c)} Repeater protocol. First, we isolate a repeater line $1,4,5,8,9$ with $Z$ measurements on the neighbourhood vertices $7,2,6$. Then, intermediate vertices along the repeater line are measured in $X$ basis sequentially, resulting in a Bell pair. \textbf{(c)}. \textbf{(d)} $X$ protocol. Instead of isolating the repeater lines, we directly $X$ measure the vertices $4,5,8$. The next step would be $Z$-measurement of all neighbourhood vertices of $1,9$, but it is not necessary in this specific case. Thus, the $X$ protocol requires fewer measurements compared to the repeater protocol and yields an additional four-node state, along with the required Bell pair.}
    \label{repeatandx}
\end{figure*}
\begin{figure*}
    \centering
    \includegraphics[width=0.9\linewidth]{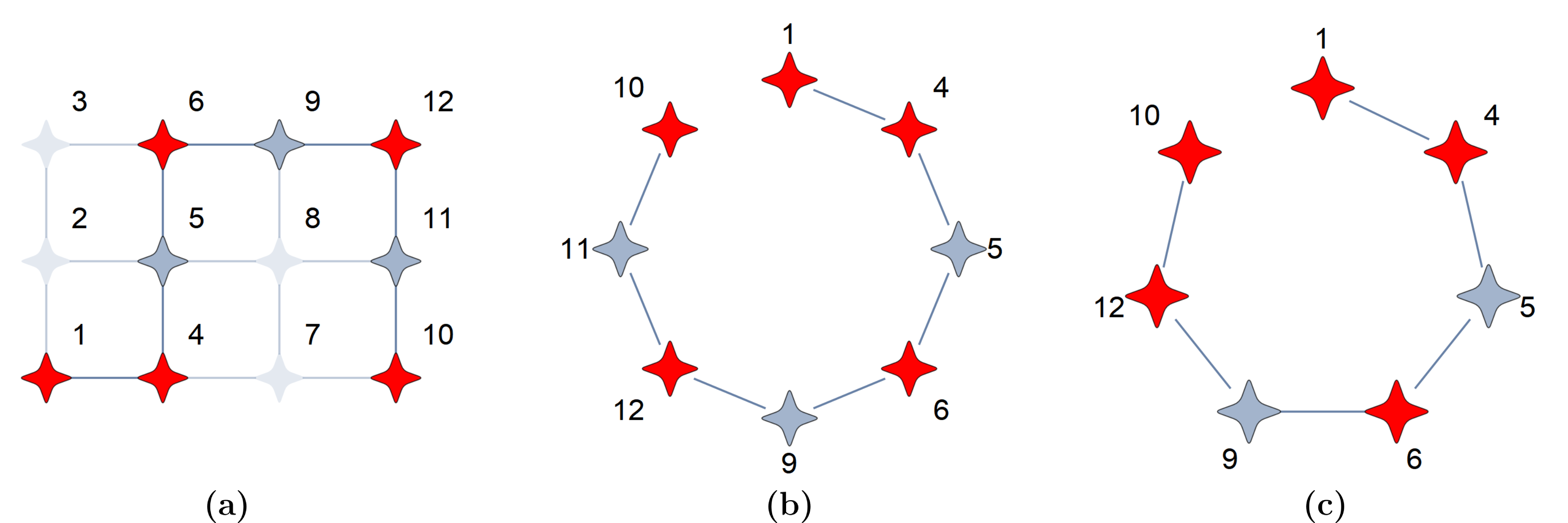}
    \caption{(Color online) Isolating the desired repeater line. \textbf{(a)} A $3\times 4$ grid network. The highlighted path connects the vertices $1,4,6,12,10$, which are to be part of the final GHZ state. Z-measurement on the vertices $3,2,8,7$ isolates this path from the rest of the graph. \textbf{(b)} The isolated path. The vertex $11$ is not required for the protocol, and we can remove it using an X-measurement. \textbf{(c)} The repeater line as required by \autoref{theo1}. It contains the five nodes of the final GHZ state and extra nodes between the intermediate nodes $12,6,4$.}
    \label{isorepeat}
\end{figure*}
\begin{figure*}
    \centering
    \includegraphics[width=0.9\linewidth]{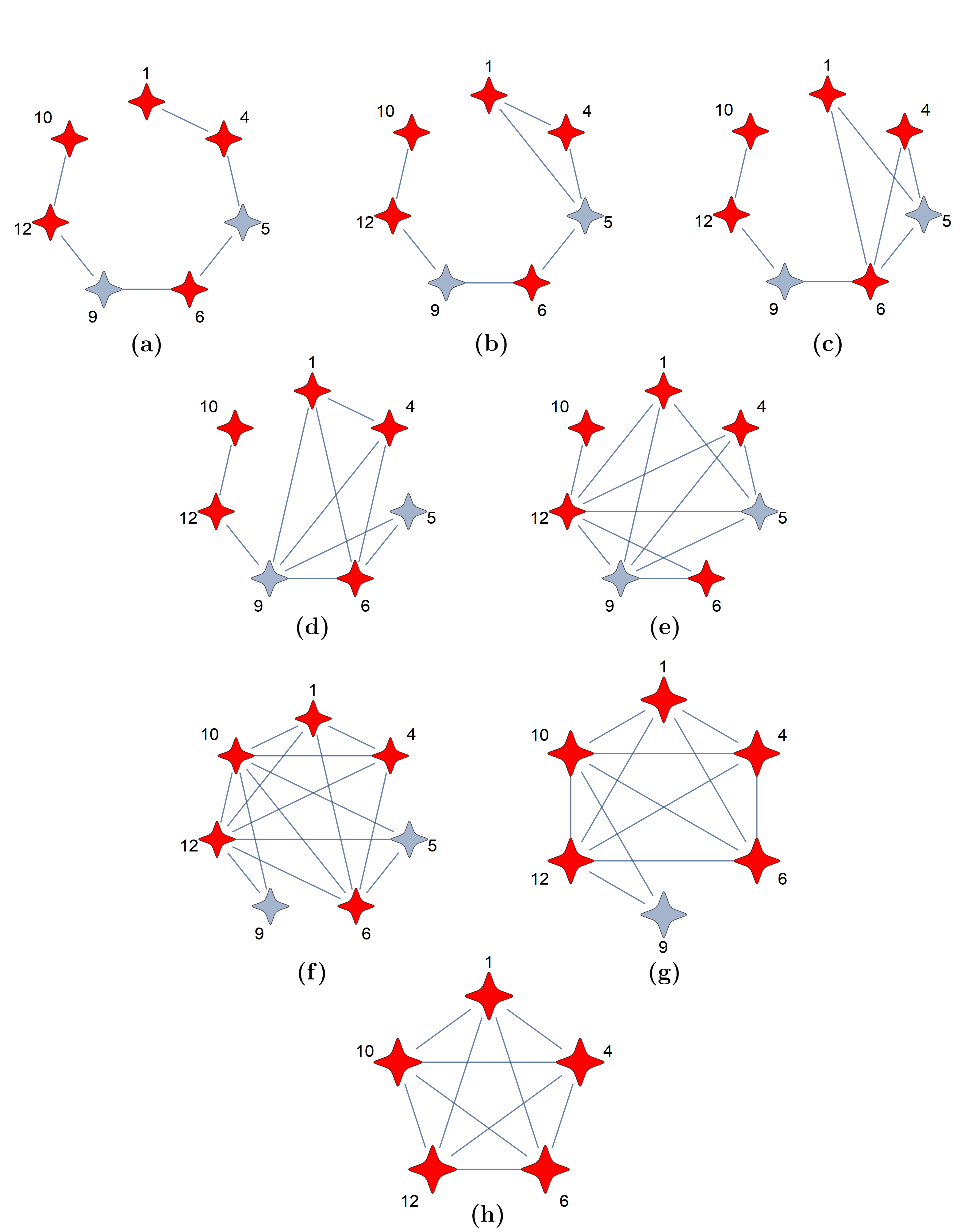}
    \caption{(Color online) Extracting GHZ5 state \textbf{(a)} The isolated repeater line from \autoref{isorepeat}. \textbf{(b)-(f)} Sequential $LC$ on vertices $4,5,6,9,12$. \textbf{(g)-(h)} $Z$-measurement of vertices $5,9$. }
    \label{seqLC}
\end{figure*}
\begin{figure*}
    \centering
    \includegraphics[width=0.85\linewidth]{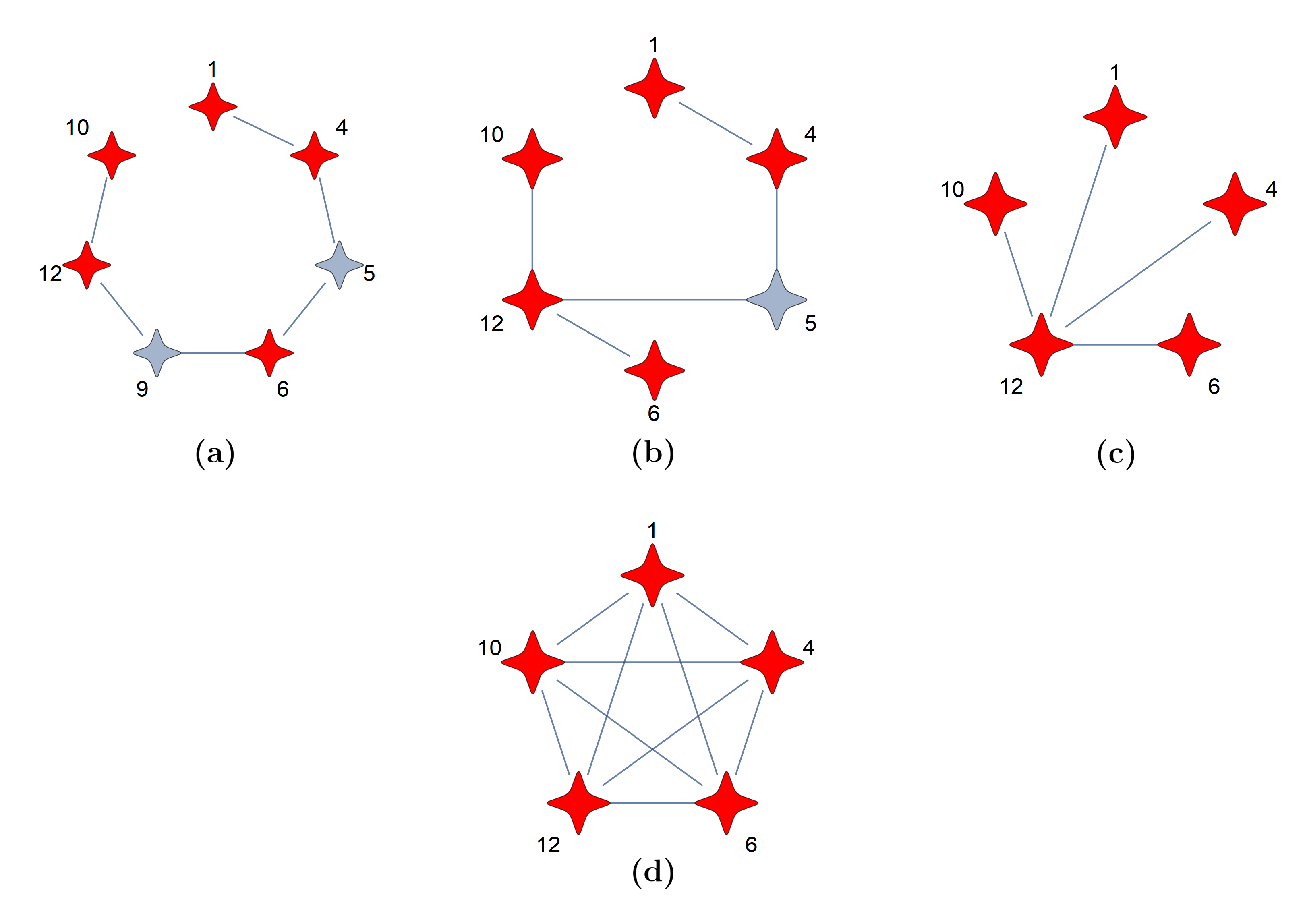}
    \caption{(Color online) Extracting GHZ5 state \textbf{(a)} The isolated repeater line from \autoref{isorepeat}. \textbf{(b)} $X$-measurement on vertex $9$. \textbf{(c)} $X$-measurement on vertex $5$. \textbf{(d)} The final state obtained by performing $LC$ on vertex $12$.}
    \label{seqx}
\end{figure*}
\section{\label{sec:prelim}Preliminaries}

An undirected finite graph $G=(V, E)$ is defined by a set of vertices $V \subsetneq \mathbb{N}$ and a set $E \subseteq V \times V$ of edges. A simple graph is a graph without any loop (an edge that connects a vertex with itself)  and multiple edges connecting the same pair of vertices. The set of all vertices having a shared edge with a given vertex $a$ is called the neighborhood of $a$ and is denoted by $N_a$.\\

\begin{definition}
 (Vertex Deletion): Deleting a vertex $v$ results in a graph where the vertex $v$ and all the edges connected to it are removed.
 $$G-v=\left(V \backslash v,\{e \in E: e \cap v=\varnothing\}\right)$$.
 \end{definition}
\begin{definition}
(Local complementation): A local complementation $LC_v$ is a graph operation specified by a vertex $v$, taking a graph $G$ to $LC_v(G)$ by replacing the neighborhood of $v$ by its complement.
$$
LC_v(G)=\left( V,E\Delta K_{N_v}\right),
$$
where $K_{N_v}$ is the set of edges of the complete graph
on the vertex set $N_v$ and $E \Delta K_{N_v}=(E \cup K_{N_v})-(E \cap K_{N_v})$ is the symmetric difference.
\end{definition}
Local complementation acts on the neighbourhood of a vertex by removing edges if they are present and
adding missing edges, if any.
\begin{definition}
(Vertex-minor): A graph $H$ is called a vertex-minor of $G$ if a sequence of local complementations and vertex-deletions maps $G$ to $H$.
\end{definition}
   The simple graph $G\left( V,E\right)$ defined so far is a mathematical entity, but in the quantum world, we can associate a pure quantum state $|G\rangle$ with it, called a graph state. A Graph state is defined on a Hilbert space $\mathcal{H}_V=\left(\mathbb{C}^{2}\right)^{\otimes V}$. Specifically, each vertex in $V$ is assigned a qubit in the state $|+\rangle=(|0\rangle+|1\rangle) / \sqrt{2}$.  Subsequently, a controlled-$Z$ operation is applied to a pair of qubits sharing an edge to construct the graph state $|G\rangle$ associated with the graph $G\left( V,E\right)$   \cite{HEB04}. Thus, a graph state is defined as follows; 
$$|G\rangle:=\prod_{(i, j) \in E} C Z_{i, j}|+\rangle^{\otimes V}.$$

Local Clifford operations on graph states defined above can be represented using local complementations on the corresponding graph   \cite{VDD04}. Local Pauli measurements on the graph states can be represented using local complementations, and vertex deletions   \cite{HEB04}. We can visualize the role played by the Pauli measurements as follows.

  \begin{prop}
 ($Z$-measurement) Measurement of a qubit, corresponding to the vertex $v$, in the $Z$-basis is represented by the vertex deletion of $v$.
 $$Z_v(G)=G-v
 $$
 \end{prop}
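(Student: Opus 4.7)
The plan is to split the edge set of $G$ into edges incident to $v$ and edges not incident to $v$, then directly compute the post-measurement state and read off the resulting graph.

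First, using the fact that all controlled-$Z$ gates commute, I would factor the graph state as
\begin{equation}
|G\rangle = \Bigl(\prod_{u \in N_v} CZ_{v,u}\Bigr)\bigl(|G-v\rangle \otimes |+\rangle_v\bigr),
\end{equation}
where $|G-v\rangle$ is built from the $CZ$ gates on edges not touching $v$ acting on the $|+\rangle$ initial states of the vertices $V\setminus\{v\}$. Next, I would expand $|+\rangle_v = (|0\rangle_v + |1\rangle_v)/\sqrt{2}$ and apply each $CZ_{v,u}$ using the elementary identity $CZ_{v,u}|b\rangle_v = |b\rangle_v\,Z_u^{b}$. This yields
\begin{equation}
|G\rangle = \tfrac{1}{\sqrt{2}}\Bigl(|0\rangle_v\,|G-v\rangle + |1\rangle_v\,\prod_{u \in N_v} Z_u\,|G-v\rangle\Bigr).
\end{equation}

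Projecting onto $|0\rangle_v$ or $|1\rangle_v$ then leaves the residual state $|G-v\rangle$ or $\prod_{u \in N_v} Z_u\,|G-v\rangle$ respectively, each with probability $1/2$. In the second branch the byproduct is a tensor product of Pauli $Z$ operators on the qubits in $N_v$, a local Clifford correction that can be undone by classical feed-forward and absorbed into the standard definition of the graph state. Hence, on the remaining $|V|-1$ qubits, the post-measurement state is locally equivalent to the graph state of $G - v$, which is precisely the content of $Z_v(G) = G - v$.

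I do not expect any serious obstacle. The argument is essentially mechanical, relying only on the commutation of $CZ$ gates and their action on $Z$-basis computational states. The single point that requires care in the write-up is the interpretation of $Z_v(G) = G - v$ at the level of states: the $+1$ outcome lands exactly on $|G-v\rangle$, whereas the $-1$ outcome requires the $\prod_{u \in N_v} Z_u$ correction, and the neighbourhood $N_v$ appearing in this correction must be understood as the neighbourhood in the original graph $G$ rather than in any post-measurement graph. Making this book-keeping explicit is the only place one can easily slip up.
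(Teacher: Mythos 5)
Your proof is correct. The paper does not actually prove this proposition---it is stated as a known fact and cited from the graph-state literature (Hein, Eisert, and Briegel)---but your derivation is precisely the standard argument found there: split $E$ into edges incident to $v$ and the rest, use commutation of the $CZ$ gates to write $|G\rangle = \tfrac{1}{\sqrt{2}}\bigl(|0\rangle_v|G-v\rangle + |1\rangle_v\prod_{u\in N_v}Z_u|G-v\rangle\bigr)$, and project. You also correctly flag the only subtlety, namely that the $-1$ outcome leaves a $\prod_{u\in N_v}Z_u$ byproduct (with $N_v$ taken in the original graph $G$) that must be undone by local Pauli corrections, so the equality $Z_v(G)=G-v$ holds up to local unitaries, which is the sense in which the paper uses it.
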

 \begin{prop}
 ($Y$-measurement) Measurement of a qubit, corresponding to the vertex $v$, in the $Y$-basis is represented by,
 $$Y_v(G)=Z_vLC_v(G)
 $$
 \end{prop}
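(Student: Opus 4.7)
The plan is to reduce the $Y$-basis measurement to the already-handled $Z$-basis measurement by conjugating with the Clifford unitary that realizes local complementation on the Hilbert space. I would invoke from \cite{VDD04,HEB04} the identity
$$U_v^{LC}\,|G\rangle = |LC_v(G)\rangle,\qquad U_v^{LC} := \exp\!\left(-i\tfrac{\pi}{4}X_v\right)\prod_{u\in N_v}\exp\!\left(i\tfrac{\pi}{4}Z_u\right),$$
taken up to a global phase, so that local complementation is really implemented by a \emph{local} Clifford acting on the graph state.

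Next I would conjugate the measurement observable $Y_v$ by $U_v^{LC}$. The neighbor factors $\exp(i\tfrac{\pi}{4}Z_u)$ all commute with $Y_v$, so only the $\exp(-i\tfrac{\pi}{4}X_v)$ factor contributes, and a single Pauli rotation calculation (a $\pi/2$ rotation about the $X$ axis sending $Y$ to $Z$) yields $U_v^{LC}\,Y_v\,(U_v^{LC})^{\dagger} = Z_v$. Since $U_v^{LC}$ is a local unitary, the post-measurement state produced by measuring $Y_v$ on $|G\rangle$ is local-Clifford-equivalent to the post-measurement state produced by measuring $Z_v$ on $|LC_v(G)\rangle$. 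Applying the preceding $Z$-measurement proposition to $LC_v(G)$ then gives the graph $LC_v(G) - v$, which is by definition $Z_v LC_v(G)$.

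The only real obstacle is bookkeeping: one must verify that the outcome-dependent Pauli byproducts and the residual action of $(U_v^{LC})^{\dagger}$ on the surviving qubits are single-qubit Cliffords, so that they do not alter the simple graph encoding the stabilizer state. Because the paper states the preceding $Z$-measurement proposition at the graph level (implicitly modulo local Cliffords, consistent with the convention used throughout the preliminaries), this final check is automatic once the reduction to $Z$-measurement has been carried out, and the proof reduces to a two-line composition of known facts.
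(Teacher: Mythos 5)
Your derivation is correct: conjugating $Y_v$ by the local-complementation unitary $U_v^{LC}$ sends it to $Z_v$ (the neighbor factors $\exp(i\tfrac{\pi}{4}Z_u)$ act on qubits $u\neq v$ and so commute trivially, and the $\pi/2$ rotation about $X$ maps $Y\mapsto Z$), which reduces the claim to the $Z$-measurement rule applied to $LC_v(G)$; the residual $\prod_{u\in N_v}\exp(-i\tfrac{\pi}{4}Z_u)$ and the outcome-dependent Pauli byproducts are local Cliffords, so the graph-level statement holds under the paper's convention. The paper itself does not prove this proposition — it is imported from Hein \emph{et al.} \cite{HEB04}, where the rule is obtained by direct manipulation of the stabilizer generators — so your reduction-via-conjugation argument is a valid, and arguably cleaner, self-contained justification of the cited fact.
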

 \begin{prop}
 \label{prop:xmeas}
 ($X$-measurement) Measurement of a qubit, corresponding to the vertex $v$, in the $X$-basis is represented by
 $$X_v(G)=LC_wZ_vLC_vLC_w(G)
 $$
 where $w\in N_v$.
 \end{prop}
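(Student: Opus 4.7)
The plan is to reduce the $X$-measurement at vertex $v$ to a $Y$-measurement by first performing a local change of basis induced by $LC_w$, and then invoking Proposition~2. The key observation is that local complementation at a neighbour $w$ of $v$ is implemented, at the state level, by the local Clifford
\[
U_w \;=\; \sqrt{-iX_w}\,\bigotimes_{u\in N_w}\sqrt{iZ_u}.
\]
Because $w\in N_v$, equivalently $v\in N_w$, a factor $\sqrt{iZ_v}$ acts on the qubit to be measured. A direct computation with $\sqrt{iZ}=e^{i\pi Z/4}$ yields $\sqrt{iZ_v}\,X_v\,\sqrt{iZ_v}^{\dagger}=-Y_v$, so measuring $X_v$ in $|G\rangle$ produces (up to a relabelling of the $\pm1$ outcomes) the same post-measurement state, in a rotated local basis, as measuring $Y_v$ in $|LC_w(G)\rangle=U_w|G\rangle$.

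I would then proceed in three graph-level steps. First, apply $LC_w$ to pass from $G$ to $LC_w(G)$; this is the basis change just described. Second, apply Proposition~2 to the $Y$-measurement of $v$ in $LC_w(G)$, giving the intermediate graph $Z_v\,LC_v\,LC_w(G)$. Third, undo the basis change on the surviving qubits: the component of $U_w$ on $v$ disappears with the projection, while on $V\setminus\{v\}$ we have $U_w^{\dagger}U_w=\mathrm{id}$, so undoing the remaining $\sqrt{-iX_w}\prod_{u\in N_w\setminus\{v\}}\sqrt{iZ_u}$ is a single application of $LC_w$ at the graph level (since local complementation is an involution). Composing the three steps yields
\[
X_v(G)\;=\;LC_w\,Z_v\,LC_v\,LC_w(G),
\]
as claimed.

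The main obstacle I anticipate is sign and byproduct bookkeeping. The conjugation $X_v\mapsto -Y_v$ swaps the labels of the two measurement outcomes, and Pauli measurements on graph states produce outcome-dependent byproduct operators on the neighbours of the measured vertex; one must check that these corrections are always single-qubit Cliffords absorbable into the local-Clifford equivalence class of the graph state, so that both branches yield states represented by the same graph $LC_w Z_v LC_v LC_w(G)$. A secondary consistency point is that the right-hand side does not actually depend on the choice of $w\in N_v$: any two neighbours $w,w'$ of $v$ furnish local-Clifford-equivalent descriptions of the post-measurement state, and the resulting graphs differ only by local complementations on $V\setminus\{v\}$, which is the usual freedom of the graph-state formalism.
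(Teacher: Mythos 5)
The paper does not prove this proposition; it is quoted as a standard result from the graph-state literature (Hein et al.), and your derivation is essentially the canonical one from that source: conjugate the $X_v$ observable by the local Clifford $U_w=\sqrt{-iX_w}\bigotimes_{u\in N_w}\sqrt{iZ_u}$ implementing $LC_w$, use $\sqrt{iZ_v}X_v\sqrt{iZ_v}^{\dagger}=\mp Y_v$ to reduce to the $Y$-measurement rule on $LC_w(G)$, and then undo the basis change. So the approach is correct and is the intended one. One point in your step three deserves more care than "local complementation is an involution": the unitary you must undo is $U_w$ defined with respect to $N_w$ \emph{in the original graph} $G$, whereas the graph-level $LC_w$ applied to the intermediate graph $Z_vLC_vLC_w(G)$ corresponds to the LC unitary for the neighbourhood of $w$ \emph{in that graph}, which generally differs because the intermediate $LC_v$ toggles edges inside $N_v\ni w$ and the deletion of $v$ removes $v$ from $N_w$. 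The mismatch is a product of $\sqrt{iZ_u}$ factors on the symmetric difference of the two neighbourhoods; these are not local Paulis, and one must check that they combine with the outcome-dependent byproduct operators of the $Y$-measurement to leave only corrections that preserve the graph $LC_wZ_vLC_vLC_w(G)$. You correctly flag the byproduct bookkeeping as the remaining obstacle, and it is exactly there that this check lives; with it carried out, the proof is complete.
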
 

\begin{definition}
(Repeater Line) A path for which  $E(V,V)=\{(v_1,v_2),\cdots,(v_{n-1},v_{n})\}$ for $V=\{v_1,\cdots,v_n\}$ is called a repeater line.
\end{definition}

\section{\label{sec:ghztstate}GHZ states}

We have already stated a set of definitions and propositions which allow us to initiate discussion on the solution of a problem stated as follows: Given a graph state, how can we efficiently share entangled states between multiple parties that are not connected via physical channels ? To answer this question, let us first look at the simplest example of connecting two distant nodes through a shared Bell pair. The straightforward solution would be to find the shortest path connecting the two nodes with the smallest combined neighbourhood,  performing $Z$ measurements on the nodes that do not lie on the path and sequential $X$-measurement on all the intermediate nodes on the path (See \autoref{repeatandx}(a)-(c) for an illustrative example). Here, the initial $Z$ measurements would isolate a repeater line between the initial vertices, and the $X$-measurement on a vertex has the simple action of connecting its neighbours and deleting the vertex itself. It is easy to see then how this protocol gives the desired result. Since a GHZ3 state can also be represented using a line graph, the same protocol as above can also be applied there. First, we find a path connecting all three nodes between which the GHZ3 state is to be distributed, isolate it from the rest of the graph using $Z$ measurements, and remove intermediate vertices using $X$ measurements. Since generating a repeater line is an essential step of these protocols, we will call them as \emph{repeater protocol} from now onward.

 An improvement over the repeater protocol described above was provided in   \cite{HPE19}. The authors provided a more efficient method for establishing entangled states than the protocol described above. Here, the efficiency of a protocol refers to the total number of measurements required to enact the protocol. The lesser the number of measurements needed, the greater the connections left in the graph, which could be used in subsequent rounds for further generation of specific entangled states. This new protocol leverages the properties of $X$ measurements on graph states and is aptly called the \emph{$X$ protocol} (see \autoref{repeatandx} (a),(d)). For the Bell pair generation between two nodes, the protocol proceeds as follows: find the shortest path between two nodes with the minimum combined neighbourhood, perform $X$ measurements on all the intermediate vertices, and subsequent $Z$-measurement of all the neighbouring vertices of the two vertices. The $X$ measurements connect the two nodes, as in the repeater protocol. The difference here is that the two nodes acquire additional neighbours during the $X$-measurement step, which increases the number of $Z$ measurements required in the next step. The authors proved \cite{HPE19} that the total number of measurements required for the $X$ protocol was nevertheless lesser than the repeater protocol. They also applied the $X$ protocol to generate GHZ3 states among three specific vertices.

 Now let us consider the problem of establishing GHZ states shared among more than $3$ parties. As explained before, the abovementioned methods cannot be generalized easily to the multiparty case. We now prove that one can extract GHZn from a connected graph, given the graph satisfies a vertex minor condition. The said requirement will likely be satisfied by graph states shared over future quantum network architectures. The protocol will then be analyzed using specific examples to showcase their benefits and to compare with previously known results.

\begin{figure*}
    \centering
    \includegraphics[width=0.9\linewidth]{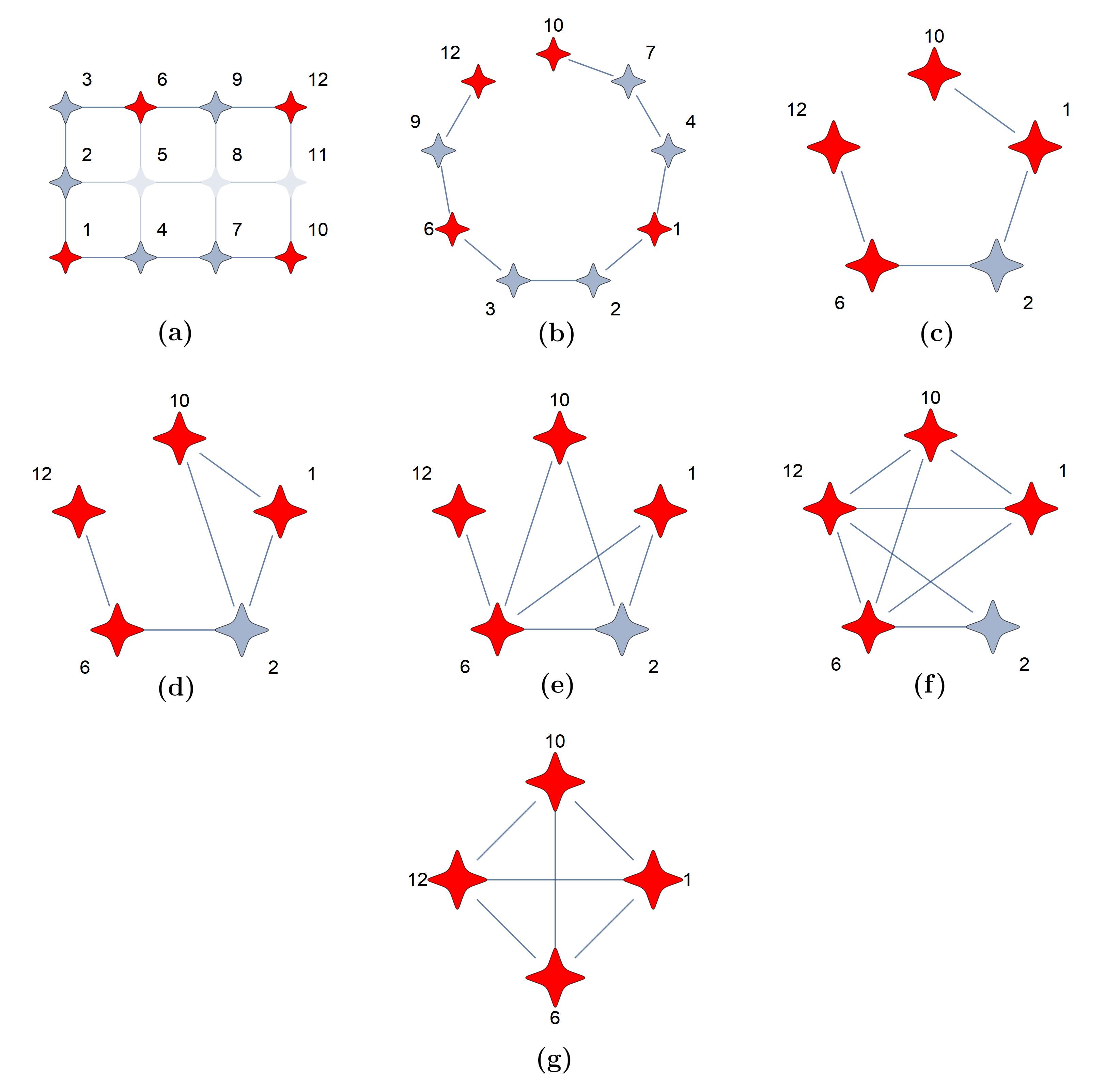}
  \caption{(Color online) Extracting GHZ4 state; example given in  \cite{HPE19}. \textbf{(a)}  Identify a suitable path connecting the vertices $12,6,1,10$ that are to be part of the final state. \textbf{(b)} Isolate the path using $Z$ measurements on vertices $5,8,11$. \textbf{(c)} Appropriate measurements on vertices $9,3,4,7$ to reach this five vertex graph. \textbf{(d)-(f)} Sequential $LC$ on $1,2,6$. \textbf{(g)} $Z$-measurement of $2$ yields the desired GHZ4 state.}
    \label{prev4ghz}
\end{figure*}

\begin{figure*}
    \centering
\includegraphics[width=\linewidth]{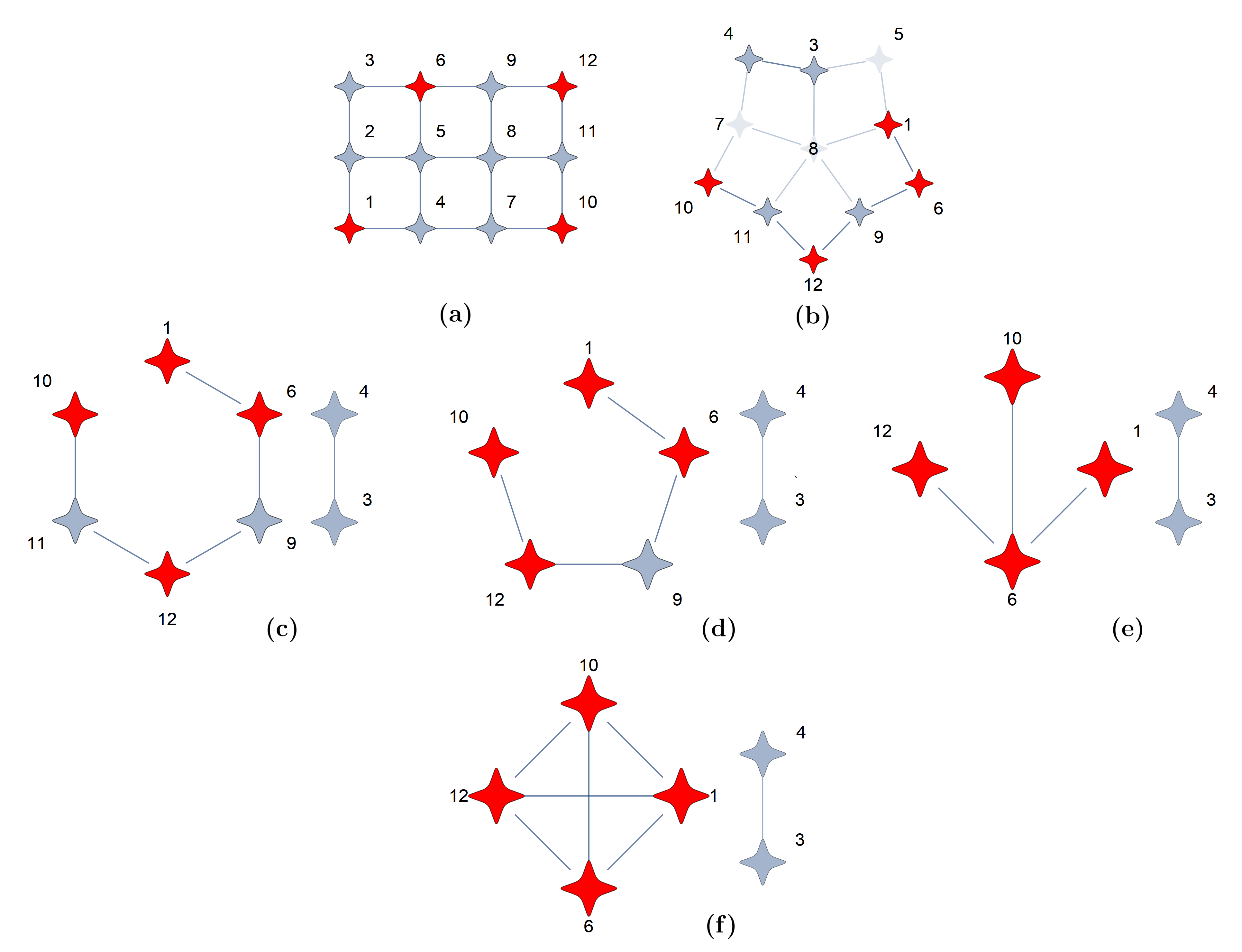}
    \caption{(Color online) Extracting GHZ4 state and a Bell pair \textbf{(a)} The same underlying graph and vertices are to be connected, as in \autoref{prev4ghz}.\textbf{(b)} $X$-measurement on $2$ converts the grid state to one of its vertex-minors \textbf{(c)} $Z$ measurements on $7,8,5$. \textbf{(d)} $X$-measurement on $11$. \textbf{(e)} $X$-measurement on $9$. \textbf{(f)} $LC$ on $6$.}
    \label{our4ghz}
\end{figure*}
\begin{theorem}
\label{theo1}
(Extraction of GHZn states) It is possible to extract an $n$-partite GHZ state from a graph state $\ket{G}$ when the underlying graph has a
repeater line as vertex-minor, connecting all $n$ nodes of the final GHZ state with an extra node in between every pair of $n-2$ intermediate nodes.
\end{theorem}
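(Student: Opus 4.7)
The proof will be constructive and splits naturally into two stages. In the first stage, I invoke the vertex-minor hypothesis directly: by assumption there exists a sequence of local complementations and vertex-deletions mapping $G$ onto the specified repeater line. Local complementations correspond to single-qubit Cliffords on the associated graph state, and by the $Z$-measurement rule stated in Section \ref{sec:prelim} each vertex-deletion corresponds to a Pauli-$Z$ measurement. Executing this sequence physically reduces $\ket{G}$ to the graph state of the repeater line, so it suffices to establish the theorem for the repeater line itself.

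Label the $n$ GHZ-designated vertices along the line as $p_1,p_2,\ldots,p_n$, with $p_1,p_n$ the endpoints, and the $n-3$ auxiliary vertices as $e_1,\ldots,e_{n-3}$, where $e_i$ sits between the intermediates $p_{i+1}$ and $p_{i+2}$. My plan for the second stage is to perform successive $X$-measurements on the extras in the natural order $e_1,e_2,\ldots,e_{n-3}$, each time choosing the left-hand neighbour (the one of smaller index) as the auxiliary vertex $w$ in \emph{Proposition} \ref{prop:xmeas}. I would then prove by induction on $k$ that after the first $k$ of these measurements the graph consists of a star on $\{p_1,\ldots,p_{k+2}\}$ centered at $p_{k+2}$, together with the tail $p_{k+2}-e_{k+1}-p_{k+3}-\cdots-p_n$ hanging from its centre. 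At $k=n-3$ this description collapses to the star on all $n$ GHZ vertices centered at $p_{n-1}$, and since the star is local-Clifford equivalent to the GHZ$n$ state, the proof is complete. The examples in \autoref{seqLC} and \autoref{our4ghz} realise variant schedules with the same result for $n=5$.

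The main obstacle is verifying the inductive step, because by \emph{Proposition} \ref{prop:xmeas} an $X$-measurement is not a simple shortcut of the line: it additionally toggles every edge inside the joint neighbourhood of $v$ and $w$, and these auxiliary flips have to be tracked carefully. At each inductive step one would expand $X_{e_{k+1}}(G)=LC_{p_{k+2}}Z_{e_{k+1}}LC_{e_{k+1}}LC_{p_{k+2}}(G)$ against the current star-plus-tail graph. The first $LC_{p_{k+2}}$ adds all missing edges among the star leaves and $e_{k+1}$; the middle $LC_{e_{k+1}}$ removes them again and redirects every leaf toward $p_{k+3}$; the $Z$-measurement erases $e_{k+1}$; and the closing $LC_{p_{k+2}}$ is a no-op since $p_{k+2}$ now has a single neighbour. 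The net effect is precisely to re-centre the star at $p_{k+3}$ and to shift the tail one step to the right, as required. Once this bookkeeping is confirmed, the local-Clifford equivalence between stars and GHZ states finishes the proof.
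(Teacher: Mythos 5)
Your argument is correct, but it reaches the conclusion by a genuinely different route than the paper. The paper first defines the protocol as a sequence of local complementations on every non-endpoint vertex of the isolated repeater line followed by $Z$-measurements of the extras; its proof combines Lemma \ref{LClemma} (a neighbourhood-growth statement) with a parity-tracking argument showing that each pair of even-position vertices has its connecting edge alternately created and destroyed, ending in the connected state because the $LC$ sequence terminates on an even vertex — so the final graph is the complete graph $K_n$, which is then identified with GHZ$n$ up to local Cliffords. The $X$-measurement formulation is only derived afterwards, in Lemma \ref{xprotocolproof}, by commuting and regrouping the operator string. You instead prove the $X$-measurement protocol directly, maintaining a full structural invariant (star on the already-processed GHZ vertices plus the unprocessed tail) through an induction on the number of measured extras, and your expansion of $X_{e_{k+1}} = LC_{p_{k+2}}Z_{e_{k+1}}LC_{e_{k+1}}LC_{p_{k+2}}$ against the star-plus-tail graph checks out step by step, terminating in the star graph rather than $K_n$ (the two are one local complementation apart). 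Your approach buys a complete description of every intermediate graph and collapses the paper's Theorem \ref{theo1} and Lemma \ref{xprotocolproof} into a single argument; it also handles the reduction to the repeater line via the vertex-minor hypothesis more explicitly than the paper's ``assume the line is isolated by $Z$ measurements.'' The paper's parity argument, by contrast, avoids committing to a global graph invariant and only certifies pairwise connectivity, which is lighter bookkeeping per step but leaves the endpoint vertices $v_1, v_{2n-3}$ (odd-position but part of the final state) treated somewhat implicitly — a small gap your induction does not have.
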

Given such a repeater line, the GHZ state can be obtained by performing sequential local complementations on all vertices except the two at both ends and subsequent $Z$-measurement of all the neighbouring vertices not part of the final state.
\begin{proof}
The proof is obtained by keeping track of the neighbourhood of the nodes throughout the measurement process. For simplicity, let us assume the case where the required repeater line is isolated from the rest of the graph through some  $Z$ measurements. 
Let us denote vertices along the path as $v_i$, vertices at even positions as $e_i$ and  at odd positions by $o_i$(except those at both ends). We have $e_i=v_{2i}$ and $o_i=v_{2i+1}$
\begin{lemma}
\label{LClemma}
After performing $LC$ on every vertex up to $v_i$ in an isolated repeater line, the neighbourhood of  $v_{i+1}$ is given by $N_{v_{i+1}}=\{v_1,\cdots,v_i,v_{i+2}\}$.
\end{lemma}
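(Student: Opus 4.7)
The plan is to prove the lemma by induction on $i$, with a slightly stronger inductive hypothesis than what the statement itself provides. The reason is that $LC_{v_i}$ toggles every edge whose two endpoints lie in $N_{v_i}$, so computing the post-complementation neighborhood of $v_{i+1}$ requires knowing not only $N_{v_i}$ but also which edges already connect $v_{i+1}$ to the other vertices of $N_{v_i}$. The lemma as stated does not control these edges, so I would strengthen it to: after sequential $LC$ on $v_1,\dots,v_i$, the vertex $v_{i+1}$ has neighborhood $\{v_1,\dots,v_i,v_{i+2}\}$, while every vertex $v_j$ with $j\geq i+2$ retains its original repeater-line neighborhood $\{v_{j-1},v_{j+1}\}$.

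The base case $i=0$ holds trivially because no operations have been applied and the isolated repeater line has the expected neighborhoods by definition. For the inductive step, assume the strengthened claim after $LC$ on $v_1,\dots,v_{i-1}$; in particular $N_{v_i}=\{v_1,\dots,v_{i-1},v_{i+1}\}$ and $N_{v_{i+1}}=\{v_i,v_{i+2}\}$. Now apply $LC_{v_i}$. Any $v_j$ with $j\geq i+2$ lies outside $N_{v_i}$, so no edge incident to it is touched by the complementation, preserving the second part of the hypothesis. For $v_{i+1}\in N_{v_i}$, the local complementation toggles exactly those edges joining $v_{i+1}$ to the remaining elements of $N_{v_i}$, namely $\{v_1,\dots,v_{i-1}\}$. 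By the strengthened hypothesis none of these edges were present beforehand, so all of them are created, giving $N_{v_{i+1}}=\{v_i,v_{i+2}\}\cup\{v_1,\dots,v_{i-1}\}=\{v_1,\dots,v_i,v_{i+2}\}$, as required.

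The main subtlety will be recognizing the need for the extra clause about $v_{i+2},v_{i+3},\dots$: without it, one cannot certify that the edges $v_kv_{i+1}$ for $k<i$ were absent just before $LC_{v_i}$ and hence created by it. Once the strengthened hypothesis is in place, the rest of the argument is a direct application of Definition~2 (the symmetric-difference description of local complementation). The neighborhoods of the already-processed vertices $v_1,\dots,v_i$ evolve in a more intricate manner, but they are irrelevant to the claim and can be safely ignored in the induction. I do not foresee any deeper obstacle beyond this careful bookkeeping and the choice of a sufficiently strong hypothesis.
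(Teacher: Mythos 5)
Your proof is correct and follows essentially the same inductive argument as the paper, which also tracks $N_{v_{i+1}}$ through successive local complementations and uses (without explicitly folding it into the induction hypothesis) the fact that $N_{v_{i+2}}=\{v_{i+1},v_{i+3}\}$ is still untouched at the relevant step. Your strengthened hypothesis merely makes that implicit auxiliary claim rigorous, so the two proofs coincide in substance.
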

\begin{proof}
 This can be easily proved through induction. After $LC$ on $v_2$, $N_{v_3}=\{v_1,v_2,v_4\}$. Assume lemma to be true for $n$; after $LC$ on $v_n$, $N_{v_{n+1}}=\{v_1,\cdots,v_{n},v_{n+2}\}$. At this point $N_{v_{n+2}}=\{v_{n+1},v_{n+3}\}$. $LC$ on $v_{n+1}$ gives us \begin{eqnarray}
 N_{v_{n+2}}&= & N_{v_{n+1}}\cup\{v_{n+1},v_{n+3}\}\setminus v_{n+2}\nonumber\\
 &=&\{v_1,\cdots,v_{n+1},v_{n+3}\},\nonumber
 \end{eqnarray} proving the lemma. This implies that $LC$ on  $v_i$ connects $v_{i+1}$ to every vertex before it in the repeater line.
\end{proof}
Take any two vertices $e_i,e_j;i<j$. Since the choice of $i,j$ is arbitrary, it suffices to prove that they remain connected at the end of the protocol. We can now look at how the edge connection between these vertices develops throughout the protocol. Note that initially, $e_i$ and $e_j$ are not connected. There exists at least one odd vertex in between them. The two vertices are connected for the first time when one performs $LC$ on the odd vertex $o_{j-1}$ (Lemma \ref{LClemma}). Next $LC$ is to be performed on $e_j$; which does not affect the connection. However, this operation connects $o_j$ to $e_i$. Thus, the next $LC$, performed on $o_j$ removes the edge between $e_i$ and $e_j$. Since now $e_{j+1}$ is connected to $e_i$ and $e_j$ and $e_i$, $e_j$ are not connected to each other, $LC$ on $e_{j+1}$ rebuilds the edge between $e_i$ and $e_j$. This pattern is followed by the rest of the protocol, where $LC$ on odd vertices removes the edge and $LC$ on even vertices adds the edge. Since the sequence of $LC$'s ends with an even vertex, owing to the specific construction of the repeater line, $e_i$, $e_j$ remains connected at the end of $LC$ operation. $Z$ measurements on odd vertices will not affect the connectivity of even vertices. Since this applies for any $i,j$ with $i<j$, all $n$ vertices are connected at the end of this protocol, and we have the desired GHZ state.
\end{proof}
\begin{lemma}
\label{xprotocolproof}
(Generalized $X$ protocol)Assuming the  repeater line required by \autoref{theo1} exists, performing $X$ measurement on the odd vertices and subsequent $Z$-measurement of all the
neighbouring vertices not part of the final state yields the desired GHZ state
\end{lemma}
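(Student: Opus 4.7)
The plan is to expand each $X$-measurement via Proposition~\ref{prop:xmeas} and then mimic the inductive analysis used for Theorem~\ref{theo1}, tracking the graph structure as the $X$-measurements are performed in sequence. For clarity I would assume that the repeater line has already been isolated from the rest of the graph; the subsequent $Z$-measurements of external neighbours simply trim any stray cross-edges at the end, so it suffices to treat the isolated line. Label the repeater line $v_1, v_2, \ldots, v_{2n-3}$, with $v_1$ and $v_{2n-3}$ the endpoints and the even-indexed $v_{2k}$ the intermediate party vertices, as in the proof of Theorem~\ref{theo1}. For each odd non-endpoint vertex $v_{2k+1}$, I would choose the witness $w = v_{2k}$ in Proposition~\ref{prop:xmeas}, so that $X_{v_{2k+1}} = LC_{v_{2k}} \, Z_{v_{2k+1}} \, LC_{v_{2k+1}} \, LC_{v_{2k}}$.

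The key step is to establish the following invariant by induction on $k$: after performing $X_{v_3}, X_{v_5}, \ldots, X_{v_{2k+1}}$ in sequence, the surviving graph is a star centred at $v_{2k+2}$ with leaves $\{v_1, v_2, v_4, \ldots, v_{2k}\}$, joined to the unmeasured tail $v_{2k+2} - v_{2k+3} - \cdots - v_{2n-3}$. The base case $k=1$ is a direct expansion: on the initial line, $LC_{v_2}$ produces the edge $v_1 - v_3$, $LC_{v_3}$ rewires $\{v_1, v_2, v_4\}$ among themselves, deleting $v_3$ leaves $v_1$ and $v_2$ both adjacent to $v_4$, and the final $LC_{v_2}$ is trivial. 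For the inductive step, starting from the star-plus-tail graph I would compute the effect of $X_{v_{2k+3}}$ as follows: the first $LC_{v_{2k+2}}$ turns the leaf set into a clique and wires $v_{2k+3}$ to every leaf; the inner $LC_{v_{2k+3}}$ toggles every edge in its new neighbourhood, which annihilates the clique on the original leaves and migrates them from $v_{2k+2}$'s neighbourhood into $v_{2k+4}$'s; deleting $v_{2k+3}$ removes that vertex; and the final $LC_{v_{2k+2}}$ is trivial, because after the deletion $v_{2k+2}$ is left with $v_{2k+4}$ as its sole neighbour. The net effect is the star centred at $v_{2k+4}$ with the enlarged leaf set $\{v_1, v_2, v_4, \ldots, v_{2k+2}\}$, restoring the invariant with the tail shortened by two.

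After $n-3$ iterations, the invariant produces a star centred at $v_{2n-4}$ with leaves $\{v_1, v_2, v_4, \ldots, v_{2n-6}, v_{2n-3}\}$, which is exactly the standard graph-state representation of the $n$-party GHZ state. Reinstating the external $Z$-measurements at the end only removes residual cross-edges outside the repeater line, yielding the desired GHZn state on the $n$ party vertices.

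The main obstacle is the bookkeeping in the inductive step: the two flanking $LC_{v_{2k+2}}$ operations do not simply cancel, because the middle operations $LC_{v_{2k+3}}$ and $Z_{v_{2k+3}}$ alter the neighbourhood of $v_{2k+2}$ in between. Verifying that their combined action, together with the sandwiched $LC_{v_{2k+3}}$ and deletion, cleanly shifts the star centre by one position and absorbs exactly one new leaf while leaving no spurious edges among the accumulated leaf set requires a careful case analysis of the edges toggled at each substep. Once this invariant is in place, however, the rest of the induction is routine, and the conclusion that GHZn is obtained follows immediately from the final graph being a star on the $n$ party vertices.
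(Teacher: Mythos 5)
Your proof is correct, but it takes a genuinely different route from the paper's. The paper proves this lemma by \emph{reduction to Theorem~\ref{theo1}}: it writes out the sequence $Z_3Z_5\cdots Z_{2n-5}LC_{2n-4}\cdots LC_2$ used there, commutes the $Z$'s past the $LC$'s, regroups the operations into blocks $Z_iLC_iLC_{i-1}$, and observes that each block is an $X$-measurement $LC_{i-1}Z_iLC_iLC_{i-1}$ missing its outer $LC_{i-1}$ — which is then argued to be harmless because after $Z_i$ the vertex $v_{i-1}$ has only $v_{i+1}$ as a relevant neighbour. Correctness is thus inherited from Theorem~\ref{theo1}. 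You instead give a self-contained induction that tracks the exact graph after each $X$-measurement (with the witness fixed to $w=v_{2k}$), establishing the ``star centred at $v_{2k+2}$ with leaves $\{v_1,v_2,\ldots,v_{2k}\}$ plus tail'' invariant; I checked the base case and the inductive step and both computations are right, including the observation that the trailing $LC_{v_{2k+2}}$ is trivial because $v_{2k+2}$ is left with the single neighbour $v_{2k+4}$. Your approach buys an explicit description of every intermediate state and terminates directly in the star-graph (hence local-Clifford-equivalent GHZ$n$) form, at the cost of not reusing Theorem~\ref{theo1}; the paper's regrouping is shorter but leans on the commutation and ``leftover $LC$'' arguments. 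One small point: your dismissal of the non-isolated case (``trims stray cross-edges at the end'') implicitly uses the fact that local complementation and vertex deletion commute with restriction to the induced subgraph on the repeater-line vertices, so that subgraph evolves identically whether or not the line is embedded; this is true and is exactly the level of justification the paper itself gives, but it is worth stating explicitly if you write this up.
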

\begin{proof}
We will show how this generalized $X$ protocol is equivalent to the protocol used to prove \autoref{theo1}. As before, we assume that the repeater line is isolated from the rest of the graph. Such a repeater line requires, by \autoref{theo1}, at least $2n-3$ vertices for constructing the GHZn state. Lets number the vertices sequentially and represent $LC$, $Z$, $X$ measurements on vertex $v_i$ as $LC_i$, $Z_i$, $X_i$, respectively. The protocol presented above can be then represented by,
$$Z_3Z_5\cdots Z_{2n-7} Z_{2n-5}LC_{2n-4}LC_{2n-5}\cdots LC_3LC_2.
$$
Note that $Z_i$ commutes with $LC_j$, since a measurement removes the node and all the edges connected to it. Rearranging, we get,
\resizebox{\linewidth}{!}
{$LC_{2n-4}(Z_{2n-5}LC_{2n-5}LC_{2n-6})\cdots(Z_5LC_5LC_4)(Z_3LC_3LC_2)$}.
The terms in the bracket looks similar to $X-$measurement performed on node $v_i$, besides a term corresponding to the final $LC$ of the neighbour node $v_{i-1}$,
$$
X_i\equiv LC_{i-1}Z_iLC_iLC_{i-1}.
$$
A key observation here is that, after the $Z_i$ measurement, out of all the $n$ vertices to be included in the final GHZ state, $v_{i-1}$ is only connected to $v_{i+1}$. Even if we consider the case where the repeater line is not isolated, local complementation on $v_{i-1} $ will only change the edge connectivity of its neighbours, which will be deleted anyway. So it does not matter if we perform the local complementation or not. The final $LC$, $LC_{2n-4}$ can also be ignored since it's just a local operation on the final GHZ state itself.
Hence we can rewrite the protocol to be,
$$
X_{2n-5}X_{2n-7}\cdots X_5X_3.
$$
\end{proof}

In \autoref{isorepeat}, we have shown an example of isolating the desired repeater line from an underlying $3\times4$ grid-graph to subsequently distill a  GHZ5 state composed of the nodes $1,4,6,12,10$. Note that depending on the underlying graph and the distribution of vertices, such a repeater line may or may not exist. In this case, the repeater line composed of the vertices $1,4,5,6,9,12,11,10$ satisfies the conditions laid out in \autoref{theo1}, specifically, the existence of an extra node between the intermediate nodes $4,6,$ and $12$. Note that the node $11$ is unnecessary for the protocol and can be removed after isolating the repeater line. After we isolate the repeater line, there are two equivalent ways to generate the final GHZ state. In \autoref{seqLC}, sequential $LC$'s are applied to the repeater line vertices. This connects every vertex that's supposed to be part of the final state to each other. Removing all the unnecessary vertices in the final step yields the required state. In \autoref{seqx} $X$ measurement are carried out on the vertices that are not part of the final state. Both methods are equivalent and result in the same final state.

In the above theorems and examples, we have presented a case where the repeater line is first isolated from the graph state to apply the protocol. This is unnecessary, and one can perform the protocol directly on the repeater line while embedded within the graph state and subsequently isolate the final state from the rest of the graph using accordingly chosen measurements. This is quite similar to the contrast between the repeater protocol and $X$ protocol . The authors \cite{HPE19} had shown that operating on the repeater line before isolating the final state requires fewer measurements than if we go the other way around. We can prove similar results for our generalized version. The proof of the following lemma is given in the appendix.

\begin{lemma}
\label{xbeforiso}
The generalized $X$ protocol performed before isolating the repeater line requires, at most many measurements as the one where the repeater line is isolated first.
\end{lemma}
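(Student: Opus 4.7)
The plan is to pin the difference between the two protocols on their respective $Z$-measurements. Both perform exactly $n-3$ $X$-measurements, one on each odd-position extra of the repeater line, so their totals differ only in the $Z$ count. Write $R$ for the vertex set of the repeater line, $F \subseteq R$ for the $n$ final GHZ vertices, and $A := \bigl(\bigcup_{r \in R} N_r\bigr)\setminus R$ for the original outside-neighborhood of the line. The isolate-first protocol $Z$-measures precisely $A$, while the $X$-first protocol $Z$-measures the subset $B$ of external vertices that remain adjacent to some vertex of $F$ after all of the $X$-measurements. The lemma therefore reduces to showing $B \subseteq A$.

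To establish this, I would maintain the invariant, through the sequence of $X$-measurements, that the outside-neighborhood of the surviving repeater-line vertices is contained in $A$. Initially this holds by definition. For the inductive step, I apply $X$ to an extra $e$ via Proposition \ref{prop:xmeas} using an auxiliary neighbor $w \in N_e$ that lies in $F$; such a $w$ exists because each extra sits between two final GHZ vertices in the repeater-line path, and with an appropriate measurement order (left-to-right along the line, choosing $w$ to be the already-cleared side) those repeater-line adjacencies are preserved until $e$ itself is measured. The identity $X_e(G) = LC_w Z_e LC_e LC_w(G)$ only modifies edges within $N_e \cup N_w$, and by the invariant the external vertices in that union already lie in $A$. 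Hence no vertex of $V \setminus (R \cup A)$ is touched, and the invariant carries through. After all $n-3$ $X$-measurements, the surviving repeater-line vertices are exactly $F$, so $B \subseteq A$ and $|B| \le |A|$.

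The main obstacle will be the bookkeeping needed to verify the invariant across the full sequence, because each $X$-step is a composition of two local complementations whose joint action requires a careful edge-by-edge symmetric-difference check; one must also confirm that with the chosen measurement order the auxiliary neighbor $w \in F$ is always available. A secondary subtlety is that different choices of $w$ in Proposition \ref{prop:xmeas} produce graphs that are only locally Clifford equivalent, so the resulting $Z$-measurement count is choice-dependent; fixing the canonical choice $w \in F$ is what makes the counting inequality rigorous. A glance at Fig.~\ref{our4ghz} illustrates the phenomenon concretely, with the $X$-first variant requiring strictly fewer $Z$-measurements than a full isolation would.
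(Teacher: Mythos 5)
Your proposal is correct and follows essentially the same route as the paper: both arguments reduce the comparison to the $Z$-measurement count and rest on the observation that the $X$-measurements never enlarge the external neighborhood of the repeater line, which the paper states as the proper-subset relation $N^{n-3}_{v_1}\cup\cdots\cup N^{n-3}_{v_{2n-3}}\subset N^{0}_{v_1}\cup\cdots\cup N^{0}_{v_{2n-3}}$ and you establish as an invariant via the $LC$ decomposition of the $X$-measurement. Your inductive tracking of the auxiliary vertex $w\in F$ is in fact a more explicit justification of the paper's one-line ``crucial observation,'' so no gap remains.
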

\begin{proof}
Proof of this theorem is given in the appendix.
\end{proof}

Extracting Bell pairs and GHZ3 states with local operations is always possible in a connected graph   \cite{HPE19}. Our theorem reflects this possibility since the requirement of "an extra node in between every pair of $n-2$ intermediate nodes" is trivially satisfied for $n=2,3$. In   \cite{HPE19}, a sufficient criterion for extracting GHZ4 states was also provided. This special case ($n=4$) of our generalized protocol is presented in \autoref{prev4ghz}. A crucial part of the protocol is the generation of the repeater line connecting the final four vertices $10,1,6,12$, with an extra node between the two intermediate vertices $1,6$ (\autoref{prev4ghz} \textbf{(c)}). Thus, it satisfies the conditions of \autoref{theo1}, and the protocol proceeds by performing sequential $LC$ and $Z$ measurements, the same as in our protocol. 

 We note here that there is an even more efficient way of extracting the GHZ4 state for this specific example. A downside of using the protocol in \autoref{prev4ghz} is that it removes the possibility of extracting more entangled states from the same graph. Here, we show how this protocol can be modified to extract an extra Bell pair and the GHZ state from the same graph state (\autoref{our4ghz}). Instead of isolating the repeater line, we perform suitably chosen $X$ measurement to modify the graph state to enable simultaneous extraction of multiple states. Essentially, it converts the original state \autoref{our4ghz}\textbf{(a)} to one of its vertex-minors  \autoref{our4ghz}\textbf{(b)}, such that it enables us to distill an extra Bell pair without disturbing the GHZ4 extraction protocol.

\begin{figure*}
    \centering
    \includegraphics[width=0.9\linewidth]{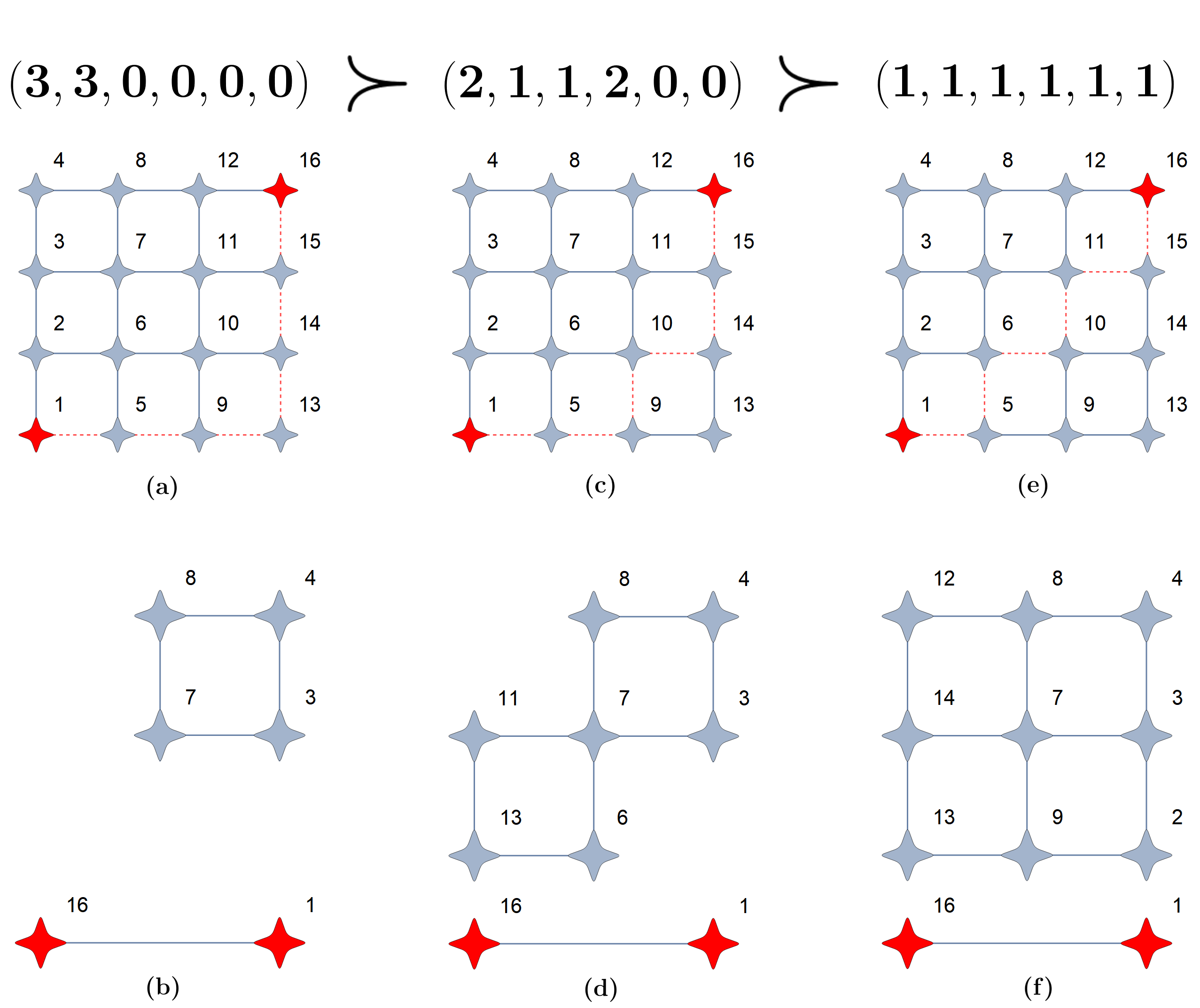}
    \caption{(Color online) Entanglement routing paths and their end-products. \textbf{(a)-(b)} $X$ protocol is performed along the path $1,5,9,13,14,15,16$. The path is equivalently represented by the vector $(3,3,0,0,0,0)$. In addition to the desired Bell pair between $1,16$, we also obtain a $2\times 2$ grid state. \textbf{(c)-(d)} A different path that requires a lesser amount of measurements than the previous path. This is evident from the higher number of connected vertices left in the graph. \textbf{(e)-(f)} The optimal path to perform the $X$ protocol. The path vector $(1,1,1,1,1,1)$ corresponding to this path is majorized by every other path. By \autoref{majortheo}, this path maximizes the amount of entanglement left in the graph.}
    \label{majorize}
\end{figure*}

\section{\label{sec:gridgraphs}Grid Graphs}

 Grid graphs belong to an important class of network architecture known as nearest-neighbour networks. Such networks are relevant to practical quantum communication since they connect nearest neighbours via physical links. Consequently, the quantum information only has to travel a short distance, thus minimizing transmission losses and errors. Nearest neighbour networks like rings, lines, and grid graphs have been studied under different scenarios, especially in communication bottlenecks. In the simplest scenario, a bottleneck arises in a network when two pairs of nodes intend to share a Bell pair over a common edge. It has been shown in   \cite{HDE+22} that ring and line networks cannot overcome such bottlenecks, whereas it has been shown to be possible in grid graphs in the case of a \emph{``butterfly network"}   \cite{HPE19,LOW10}.

This section will show how symmetry in grid graphs can be utilized in the entanglement routing problem. We will explore the simplest routing task of establishing a Bell pair between two distant nodes (with no direct physical link) in a connected graph. A naive solution to this would be the repeater protocol. We have already mentioned that a more efficient approach was discussed in   \cite{HPE19}, which we have referred to as the $X$ protocol. The authors proved that for the shortest with the minimum combined neighbourhood path, $X$ protocol requires fewer measurements than the repeater protocol.

However, there is still an ambiguity left on the \emph{choice} of the path to make, as the minimum combined neighbourhood path is not always unique. To illustrate this point, in \autoref{majorize}, we show all the different paths (with minimum length and combined neighbourhood) one can have given a graph and a pair of nodes. The theorems in   \cite{HPE19} just tell us that, given any of these paths, the $X$ protocol would perform better than the repeater protocol; it does not provide any information on the ideal path to choose. It is unclear whether all the paths are equivalent or if one performs better. Thus, a set of open questions remain, and in what follows, we will address these questions.  

In the following, we define a property to these paths, enabling us to define a ``better" path qualitatively. We use the concept of majorization, an ordering relation on real vectors. Majorization has already seen extensive applications in quantum information, including entanglement theory  \cite{N99} and formulation of resource theories   \cite{GMN+15}. Here, we will use this tool to judge the entanglement routing paths in a quantum network qualitatively. We associate a vector to all the feasible paths, referred to as the \emph{path vector}. Given a set of such path vectors, we can impose an ordering relation based on the majorization relation. This would enable us to infer the best path to choose for entanglement routing.

The first step towards defining the path vector is associating a sense of direction within the grid graph. This is done easily by viewing the grid graph on a cartesian plane and locating every vertex on the coordinates $(x,y);x,y\in \mathbb{N}$, with adjacent vertices separated by unit distance on either of the coordinates. Every edge in this setting will be parallel to the $x$ or the $y$ axis. This implies that every path on a grid graph can be represented using a sequence of edges along the $x$ and $y$ directions. Since we are only concerned with the shortest path between two nodes, without loss of generality, we can restrict ourselves to positive $x$ and $y$. We now have the necessary ingredients to define a path vector.

 Suppose we have two vertices $a$ and $b$, with the coordinates $(x,y)$ and $(x',y')$. Define $D_x=|x'-x|$ and $D_y=|y'-y|$. $D=D_x+D_y$ is the number of edges in the shortest path connecting $a,b$. A path vector $\mathbf{s}$ given by $(s_{x_1},s_{y_1},s_{x_2},s_{y_2}\cdots,s_{x_{D_x}},s_{y_{D_y}})$ defines a shortest path between $a$ and $b$ iff $s_{x_1}+s_{x_2}+\cdots+s_{x_{D_x}}=D_x$ and $s_{y_1}+s_{y_2}+\cdots+s_{y_{D_y}}=D_y$. Here, $s_{x_i}$ is the number of consecutive edges the path covers in the $x$ direction, after which it changes direction and covers $s_{y_i}$ number of edges in the $y$ direction. Hence every shortest path will have a unique vector. We will represent the path with path vector $\mathbf{s}$  as S.

\begin{definition}
(Majorization) Given $\mathbf{s,t}\in\mathbb{R}^d$ we say that $\mathbf{t}$ is majorized by  $\mathbf{s}$ (written as $\mathbf{s} \succ \mathbf{t}$) iff

\begin{equation}
\sum_{i=1}^k s_i^{\downarrow} \geq \sum_{i=1}^k t_i^{\downarrow} \quad \text { for } k=1, \ldots, d-1\nonumber
\end{equation}
\begin{equation}
\sum_{i=1}^d s_i=\sum_{i=1}^d t_i,\nonumber\\
\end{equation}
where
$\mathbf{s}^{\downarrow} \in \mathbb{R}^d $ is the vector $\mathbf{s}$ with the same components but sorted in descending order. Similar definition holds for $\mathbf{t}^{\downarrow}$. 
\end{definition}

\begin{theorem}
Given path vectors $\mathbf{s}$, $\mathbf{t}$ and $
\mathbf{s} \succ \mathbf{t}$, $X$ protocol along T requires at most as many measurements as the one along S.
\label{majortheo}
\end{theorem}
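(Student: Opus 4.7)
The plan is to reduce the comparison of measurement counts to a combinatorial inequality about corner counts, and then to deduce that inequality from the majorization order on path vectors. First, I would decompose the $X$ protocol's measurement count along a shortest path $P$ as $M(P) = (D-1) + |N^{*}(P)|$, where $D-1$ counts the $X$-measurements on intermediate path vertices (the same for every shortest $a$-to-$b$ path) and $|N^{*}(P)|$ counts the $Z$-measurements, with $N^{*}(P)$ the set of vertices adjacent to the path but not on it. After the $X$-measurements, $N^{*}(P)$ is exactly the combined neighbourhood of the two endpoints that must be eliminated to leave a clean Bell pair. Since $D$ is common to every shortest path, the theorem reduces to $|N^{*}(S)| \ge |N^{*}(T)|$.

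Next, I would evaluate $|N^{*}(P)|$ combinatorially in the grid. Each intermediate vertex of a monotone shortest path has exactly two off-path grid neighbours (one on each perpendicular side of its path step), giving a raw count $2(D-1)$ from intermediate vertices plus a path-independent endpoint contribution $c_{a,b}$. Monotonicity of shortest grid paths then forces any overlap among off-path neighbours to occur precisely at a corner of $P$: when the path turns at a vertex $v$, the two path-adjacent vertices of $v$ share exactly one off-path neighbour, the grid point ``inside'' the corner, and no other coincidences can arise. Collecting terms yields $|N^{*}(P)| = 2(D-1) + c_{a,b} - \kappa(P)$, with $\kappa(P)$ the number of corners of $P$. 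Because $\kappa(P)$ equals the number of direction changes in the $x/y$-step word encoded by $\mathbf{s}$ — the number of transitions between $x$-type and $y$-type entries when the nonzero components of $\mathbf{s}$ are read in positional order — it is a combinatorial function of the path vector, and the theorem becomes the Schur-concavity-type inequality $\kappa(\mathbf{s}) \le \kappa(\mathbf{t})$ whenever $\mathbf{s} \succ \mathbf{t}$.

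The technical heart of the proof is to establish this inequality on the space of valid path vectors (those satisfying both $\sum_i s_{x_i} = D_x$ and $\sum_i s_{y_i} = D_y$). I would use the characterisation of majorization via elementary Robin Hood transfers $(s_i, s_j) \mapsto (s_i - 1, s_j + 1)$ with $s_i > s_j + 1$. When $i$ and $j$ have the same parity (both $x$-positions or both $y$-positions), such a transfer preserves both $D_x$ and $D_y$ so the vector stays valid, and a small case analysis shows $\kappa$ never decreases: either the move fills a previously empty same-parity slot, inserting a new nonzero sub-path and adding at least one direction change, or it merely shifts weight between two already-nonzero entries, leaving the support and the corner count unchanged. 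The main obstacle is that not every majorization pair $\mathbf{s} \succ \mathbf{t}$ of valid path vectors is connected by a chain of same-parity transfers, because the separate sum constraints $D_x, D_y$ are stricter than the single total-sum constraint of the standard Robin Hood argument. To resolve this I would either pair each cross-parity transfer with a compensating opposite one so that both sums are restored at every intermediate stage, or bypass the chain argument and derive $\kappa(\mathbf{s}) \le \kappa(\mathbf{t})$ directly from the sorted partial-sum inequalities by a telescoping count of type transitions along the nonzero support. Setting up such an interpolation, or the direct inequality, is the most delicate combinatorial part of the argument.
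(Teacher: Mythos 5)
Your first two steps are correct and amount to the same reduction the paper makes, reached by a more elementary route: the paper gets the $Z$-measurement count by expanding the recursive neighbourhood-update formulas of Ref.~\cite{HPE19} for $N_{v_1}^{(l-2)}$ and $N_{v_l}^{(l-2)}$, arriving at (combined neighbourhood of the path) minus (the union of the intersections $N_{v_i}^{(0)}\cap N_{v_{i+2}}^{(0)}$ of alternating path vertices), whereas you count off-path grid neighbours directly and locate every overlap at a corner, giving $|N^{*}(P)|=2(D-1)+c_{a,b}-\kappa(P)$. The two computations say the same thing --- each direction change saves one $Z$ measurement --- and your version has the advantage of showing explicitly that the raw count is path-independent rather than relying on the restriction to minimum-combined-neighbourhood paths. (Both versions tacitly assume interior vertices of degree four; grid-boundary vertices deserve a remark in either write-up.)

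The step you flag as delicate is where the real issue lies, and your instinct that a Robin Hood chain will not work is sound, but the reason is sharper than a parity obstruction: for vectors constrained only by the two sums $D_x$ and $D_y$, the corner count is simply not a function of the sorted vector. For $D_x=D_y=3$ the vectors $(1,3,1,0,1,0)$ and $(3,1,0,1,0,1)$ have identical sorted forms (so each majorizes the other) yet encode the paths $x^{1}y^{3}x^{2}$ and $x^{3}y^{3}$, with $2$ and $1$ corners respectively; no transfer argument or telescoping of partial sums can separate them. The repair is not an interpolation but a normalization: read the paper's definition as intending each component to be a \emph{maximal} run (consecutive nonzero runs alternate direction, trailing components are zero), which is how all of its examples are written. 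Under that canonical encoding $\kappa(\mathbf{s})=\lvert\{i:s_i\neq 0\}\rvert-1$, and majorization controls the support size directly: if $\mathbf{t}$ has $z$ zero components, the sum of the $z$ smallest components of $\mathbf{s}$ is at most the corresponding sum for $\mathbf{t}$, which is $0$, so $\mathbf{s}$ has at least $z$ zeros and hence $\kappa(\mathbf{s})\le\kappa(\mathbf{t})$. That two-line observation closes your gap without any transfer chain, and it is more than the paper itself supplies --- the paper disposes of this step in a single unproved sentence asserting that the majorized path ``involves more changes in direction.''
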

\begin{proof}
Proof of this theorem is given in the appendix.
\end{proof}

 For example, in \autoref{majorize}, we have shown three different paths on a $4\times4$ grid graph to establish a Bell pair between the vertices $1,16$. The length and combined neighbourhood of all paths are the same. We have shown the end result of the $X$ protocol and the corresponding path vector for each path. From the figure, it is clear that, given any pair of paths, one with the majorized path vector produces better results in the $X$ protocol. 
\section{\label{sec:discussion}Discussions}
This paper provides a protocol for extracting multipartite entangled states from quantum networks using just local measurements. Our protocol is applicable on a shared network state with a single qubit per location and is hence favorable in terms of the repeater memory required compared to previous protocols. It relies on constructing a  repeater line connecting the final vertices and local measurements on the nodes in between. Another important contribution of this work is the key observation that the protocol does not mandate the repeater line be removed from the underlying graph state, which reduces the number of measurements required for the overall protocol. Note that this protocol could still be applied if we allow for multiple qubit memories per user. We have also analyzed nearest neighbour networks and shown how their structure and symmetry can help us devise better protocols for entanglement routing.\\
This work is expected to open a window for an investigation into a plethora of new problems. For example, it would be worthwhile to explore whether the protocol could be adapted to incorporate all-photonic quantum repeaters   \cite{ATL15}. Such memory-less repeaters were introduced to establish entanglement between two distant nodes, where the repeaters would use cluster states to connect neighbouring stations. Measurements on such cluster states would eventually connect the distant nodes and establish entanglement between them. Since our protocol also requires just local $X$ measurements along a repeater line, it would be interesting to see how the all-photonic repeaters could be modified to fit our protocol. Proof-of-principle experiments based on this concept have been demonstrated successfully in recent years   \cite{LZY+19,HIM+19}, and its resource efficiency with matter-memories has been characterized   \cite{HBE21}. This makes us optimistic about the practical realization of our work in a broad domain.\\
Throughout this article, we have considered the idealistic scenario; zero-channel losses, perfect measurements, etc. In the bipartite routing scenario, the authors of   \cite{PKT+19} have calculated the entanglement generation rate under non-ideal circumstances in a network. They have shown that by taking advantage of multiple routing paths in a network, the rate of entanglement generation can supersede that is achievable using a linear repeater chain. Recently, the rates were shown to be \emph{distance independent} if GHZ projective measurements are allowed at the repeater nodes   \cite{PPE+22}. Note that these results are for generating a Bell pair among network users. A natural line of research would be to consider the multipartite rates using our protocol. \\
Recently, \emph{Ref.}\cite{JHE+22} proposed a protocol for anonymous conference key agreement among any three participants of a linear network. The users need only to share Bell pairs with their neighbours and avoid the necessity of a central server sharing multipartite states. The network users perform local measurements to extract the GHZ3 state among the participants, which can subsequently be used for key agreement. Interestingly, the protocol introduced in this work can be used to generate GHZn states from a linear network. Further, It would be interesting to see if we can extend the concepts laid out in  \cite{JHE+22} to propose a similar protocol for $n$ parties.
\section*{Acknowledgment}
VM, AP acknowledges support from the QUEST scheme of the Interdisciplinary Cyber-Physical Systems (ICPS) program of the Department of Science and Technology (DST), India, Grant No.: DST/ICPS/QuST/Theme-1/2019/14 (Q80). AP also acknowledges support from the QUEST scheme of Interdisciplinary Cyber-Physical Systems (ICPS) program of the Department of Science and Technology (DST), India, Grant No.: DST/ICPS/QuST/Theme-1/2019/6 (Q46). They also thank Kishore Thapliyal for their interest and feedback on the work.

\newpage
\newpage
\section{Appendix}
\subsection{Proof of Lemma \ref{xbeforiso}}
\textbf{Lemma \rom{3}.4.} The generalized $X$ protocol performed before isolating the repeater line requires, at most many measurements as the one where the repeater line is isolated first.
\begin{proof}
Let us count the number of measurements required for the protocol in both cases. Assume that we are trying to generate a GHZn state. By \autoref{theo1}, an appropriate repeater line should contain at least $2n-3$ vertices.
\subsubsection{With Isolation}
In this case one requires $|N_{v_1}^0\cup\cdots\cup N_{v_{2n-3}}^0|-(2n-3)$ measurements for isolating the repeater line. The first term includes the neighbourhood vertices of all the vertices on the repeater line. However, this also includes the  $2n-3$ vertices themselves. Since we are not measuring them at this stage, the second term accounts for this. After isolating, we need to perform the protocol on the repeater line. This involves $(n-3)$ $X$ measurements of all the extra nodes. Thus, in total we require $|N_{v_1}^0\cup\cdots\cup N_{v_{2n-3}}^0|-n$ measurements.
\subsubsection{Without Isolation}
Here, we first perform the $X$-measurments of all the intermediate nodes. This step takes $n-3$ measurements, the same as before. After this step, we need to isolate the $n$-party state from the rest of the graph. This requires $|N^{n-3}_{v_1}\cup N^{n-3}_{v_2}\cup\cdots\cup N_{v_{2n-4}}^{n-3}\cup N_{v_{2n-3}}^{n-3}|-n$ measurements. The term in this expression accounts for the updated neighbourhood vertices of the $n$ vertices that are part of the final state after the $X$ measurements on the repeater line. Since this also includes the $n$
 vertices themselves, we subtract it. Thus, the protocol requires a total of $|N^{n-3}_{v_1}\cup N^{n-3}_{v_2}\cup\cdots\cup N_{v_{2n-4}}^{n-3}\cup N_{v_{2n-3}}^{n-3}|-3$ measurements.\\

 Now, all we need to prove is that
 \small
\begin{equation}
    |N^{n-3}_{v_1}\cup N^{n-3}_{v_2}\cdots N_{v_{2n-4}}^{n-3}\cup N_{v_{2n-3}}^{n-3}|\leq |N_{v_1}^0\cup\cdots\cup N_{v_{2n-3}}^0|-(n-3).
\label{setrelat}
\end{equation}
\normalsize
A crucial observation here is that $X$ measurements on the repeater line do not change the combined neighbourhood of the repeater line, except for the number of deleted vertices. This implies that,
$$N^{n-3}_{v_1}\cup N^{n-3}_{v_2}\cdots N_{v_{2n-4}}^{n-3}\cup N_{v_{2n-3}}^{n-3}\subset N_{v_1}^0\cup\cdots\cup N_{v_{2n-3}}^0.$$
The subset relation is proper since the LHS does not contain the $n-3$ vertices in RHS that were $X$ measured. This proves \autoref{setrelat} and hence the lemma.
\end{proof}

\subsection{Proof of \autoref{majortheo}}

\textbf{Theorem \rom{4}.1.} Given path vectors $\mathbf{s}$, $\mathbf{t}$ and $
\mathbf{s} \succ \mathbf{t}$, $X$ protocol along T requires at most as many measurements as the one along S.

\begin{proof}

Given two path vectors, one being majorized by the other implies that the path contains fewer consecutive edges along a particular direction. The majorized path thus involves more changes in direction. A direct consequence is that a higher number of alternating vertices along the path share a common neighbour. We show how the expressions for the total number of measurements derived in the supplementary information of   \cite{HPE19} relate to the neighbourhood intersections of vertices along the path.\\
In the following, we denote by  $N_{v_i}^{(t)}$ the neighborhood of node  $v_i$ after the  $t^{t h}$  Pauli measurement is performed on the initially given graph state. Briefly recalling, in the $X$ protocol, we first measure all the intermediate nodes along the path in $X$ basis and subsequently $Z$ measure all the neighbours of the connected pair. If the path contains $l$ nodes, we require $(l-2)$ number of $X$ measurements and some number of $Z$ measurements (depending on the number of neighbours of the connected pair). In the supplementary information of    \cite{HPE19}, the authors have derived an expression for the total number of $Z$ measurements, given by,

\begin{equation}
\label{neighnumb}
\begin{aligned}
&N_{v_1}^{(l-2)}=\left(N_{v_{l-1}}^{(0)} \cup N_{v_1}^{(l-4)}\right) \backslash\left(N_{v_{l-1}}^{(0)} \cap N_{v_1}^{(l-4)}\right) \\
&N_{v_l}^{(l-2)}=\left\{v_1\right\} \cup\left(N_{v_l}^{(0)} \cup N_{v_1}^{(l-3)}\right) \backslash\left(N_{v_l}^{(0)} \cap N_{v_1}^{(l-3)}\right)
\end{aligned}
\end{equation}

	Since they only considered the shortest path while proving this result, the path length is fixed, and the number of intermediate measurements required is the same for the different paths. Thus, we need to focus only on the number of measurements required to isolate the endpoints $v_1$ and $v_l$. Let us first consider the neighbourhood of the first vertex $v_1$ after all the $X$ measurements,
	
	$$N^{(l-2)}_{v_1}=\left(N^{(0)}_{v_{l-1}}\cup N^{(l-4)}_{v_{1}}\right) 
	\setminus \left(N^{(0)}_{v_{l-1}}\cap N^{(l-4)}_{v_{1}}\right)$$
 It is clear that greater the intersection between the terms $N^{(0)}_{v_{l-1}}$ and $N^{(l-4)}_{v_{1}}$, the lesser the cardinality of $N^{(l-2)}_{v_1}$.

Exapnding $N_{v_1}^{(l-4)}$ using \autoref{neighnumb}, we obtain

\small
\begin{eqnarray}
\label{1vertexpa}
N^{(l-2)}_{v_1}&=&N^{(0)}_{v_{l-1}}\cup\left( \left(N_{v_{l-3}}^{(0)}\cup N_{v_1}^{(l-6)}\right) 
	\setminus \left(N_{v_{l-3}}^{(0)}\cap N_{v_1}^{(l-6)}\right)\right)\setminus \nonumber\\ & \hspace{4em}&N^{(0)}_{v_{l-1}}\cap\left( \left(N_{v_{l-3}}^{(0)}\cup N_{v_1}^{(l-6)}\right)
	\setminus \left(N_{v_{l-3}}^{(0)}\cap N_{v_1}^{(l-6)}\right)\right).\nonumber
	\end{eqnarray}
\normalsize	
Consider the second half of the above expression.
	$$
N^{(0)}_{v_{l-1}}\cap\left( \left(N_{v_{l-3}}^{(0)}\cup N_{v_1}^{(l-6)}\right) 
	\setminus \left(N_{v_{l-3}}^{(0)}\cap N_{v_1}^{(l-6)}\right)\right)
	$$
	Using the recursive formula \autoref{neighnumb}, one sees that $N_{v_1}^{(l-6)}$, when expanded, contains terms of the form $N_{v_{l-5}}^{(0)},N_{v_{l-7}}^{(0)}\cdots$.
	Since we are restricting ourselves to the shortest paths in a grid graph, none of those terms will have non-empty intersections with $N^{(0)}_{v_{l-1}}$. Thus, the above expression reduces to,
	\begin{equation}
	\label{1vert1term}
	    	N^{(0)}_{v_{l-1}}\cap N_{v_{l-3}}^{(0)}
	\end{equation}

Now, for the first half of \autoref{1vertexpa},
	$$
	N^{(0)}_{v_{l-1}}\cup\left( \left(N_{v_{l-3}}^{(0)}\cup N_{v_1}^{(l-6)}\right) 
	\setminus \left(N_{v_{l-3}}^{(0)}\cap N_{v_1}^{(l-6)}\right)\right).
	$$
	Since  	$N^{(0)}_{v_{l-1}}$ have no intersection with the term $\left(N_{v_{l-3}}^{(0)}\cap N_{v_1}^{(l-6)}\right)$, we can rewrite the above expression  as,
	$$
\left(	N^{(0)}_{v_{l-1}}\cup N_{v_{l-3}}^{(0)}\cup N_{v_1}^{(l-6)} \right)
	\setminus \left(N_{v_{l-3}}^{(0)}\cap N_{v_1}^{(l-6)}\right).
	$$
	Since $N_{v_{l-3}}^{(0)}$ only intersects with $N_{v_{l-5}}^{(0)}$ out of all the terms in the expansion of $N_{v_1}^{(l-6)}$, this becomes,
	$$
\left(	N^{(0)}_{v_{l-1}}\cup N_{v_{l-3}}^{(0)}\cup N_{v_1}^{(l-6)} \right)
	\setminus \left(N_{v_{l-3}}^{(0)}\cap N_{v_{l-5}}^{(0)}\right).
	$$ 
	One can keep applying the same type of argument to terms of the form $N_{v_1}^{(l\neq0)}$, and together with \autoref{1vert1term} we get

	\begin{eqnarray}
	N^{(l-2)}_{v_1}&=& \left(	N^{(0)}_{v_{l-1}}\cup N_{v_{l-3}}^{(0)}\cup N_{v_{l-5}}^{(0)}\cup\cdots\right)\nonumber \\ &\hspace{4em}&\setminus\left(N_{v_{l-1}}^{(0)}\cap N_{v_{l-3}}^{(0)}\right)\setminus \left(N_{v_{l-3}}^{(0)}\cap N_{v_{l-5}}^{(0)}\right)\cdots.\nonumber
	\end{eqnarray}
	Similiarly for $	N^{(l-2)}_{v_l}$,
	
	\begin{eqnarray}
	N^{(l-2)}_{v_l}&=&\left(	N^{(0)}_{v_{l}}\cup N_{v_{l-2}}^{(0)}\cup N_{v_{l-4}}^{(0)}\cup\cdots\right)\nonumber \\ &\hspace{4em}&\setminus\left(N_{v_{l}}^{(0)}\cap N_{v_{l-2}}^{(0)}\right)\setminus \left(N_{v_{l-2}}^{(0)}\cap N_{v_{l-4}}^{(0)}\right)\cdots.\nonumber	\end{eqnarray}
Thus the total $Z$ measurements required becomes,
\small
\begin{eqnarray}
|N^{(l-2)}_{v_1}\cup	N^{(l-2)}_{v_l}|&=&|N^{(0)}_{v_{l}}\cup	N^{(0)}_{v_{l-1}}\cup
N_{v_{l-2}}^{(0)}\cup
 \cdots \cup N_{v_{1}}^{(0)}|-\nonumber\\
 &&|\left( N_{v_{l}}^{(0)}\cap N_{v_{l-2}}^{(0)} \right)\cup \left(N_{v_{l-1}}^{(0)}\cap N_{v_{l-3}}^{(0)}\right)\cup\nonumber\\
 &&\cdots\cup\left( N_{v_{1}}^{(0)}\cap N_{v_{3}}^{(0)} \right)|.\nonumber
\end{eqnarray}
\normalsize
The first term in the above expression is the combined neighourhood of the path and the second term is the sum of intersections between alternate vertices. For the shortest path in a grid graph, the expression is minimized when the number of neighbourhood intersections of vertices on the path is maximized. 
\end{proof}

\bibliographystyle{naturemag}
\bibliography{network.bib}

\begin{thebibliography}{10}
\expandafter\ifx\csname url\endcsname\relax
  \def\url#1{\texttt{#1}}\fi
\expandafter\ifx\csname urlprefix\endcsname\relax\def\urlprefix{URL }\fi
\providecommand{\bibinfo}[2]{#2}
\providecommand{\eprint}[2][]{\url{#2}}

\bibitem{CZL+21}
\bibinfo{author}{Chen, J.-P.}, \bibinfo{author}{Zhang, C.},
  \bibinfo{author}{Liu, Y.} \emph{et~al.}
\newblock \bibinfo{title}{{Twin-field quantum key distribution over a
  511{\hspace{0.167em}}km optical fibre linking two distant metropolitan
  areas}}.
\newblock \emph{\bibinfo{journal}{Nature Photonics}}
  \textbf{\bibinfo{volume}{15}}, \bibinfo{pages}{570--575}
  (\bibinfo{year}{2021}).
\newblock \urlprefix\url{https://doi.org/10.1038/s41566-021-00828-5}.

\bibitem{ACS+21}
\bibinfo{author}{Avesani, M.}, \bibinfo{author}{Calderaro, L.},
  \bibinfo{author}{Schiavon, M.} \emph{et~al.}
\newblock \bibinfo{title}{{Full daylight quantum-key-distribution at 1550 nm
  enabled by integrated silicon photonics}}.
\newblock \emph{\bibinfo{journal}{npj Quantum Information}}
  \textbf{\bibinfo{volume}{7}}, \bibinfo{pages}{93} (\bibinfo{year}{2021}).
\newblock \urlprefix\url{https://doi.org/10.1038/s41534-021-00421-2}.

\bibitem{PLO+17}
\bibinfo{author}{Pirandola, S.}, \bibinfo{author}{Laurenza, R.},
  \bibinfo{author}{Ottaviani, C.} \& \bibinfo{author}{Banchi, L.}
\newblock \bibinfo{title}{{Fundamental limits of repeaterless quantum
  communications}}.
\newblock \emph{\bibinfo{journal}{Nature Communications}}
  \textbf{\bibinfo{volume}{8}}, \bibinfo{pages}{15043} (\bibinfo{year}{2017}).
\newblock \urlprefix\url{https://doi.org/10.1038/ncomms15043}.

\bibitem{SSR+11}
\bibinfo{author}{Sangouard, N.}, \bibinfo{author}{Simon, C.},
  \bibinfo{author}{de~Riedmatten, H.} \& \bibinfo{author}{Gisin, N.}
\newblock \bibinfo{title}{{Quantum repeaters based on atomic ensembles and
  linear optics}}.
\newblock \emph{\bibinfo{journal}{Review of Modern Physics}}
  \textbf{\bibinfo{volume}{83}}, \bibinfo{pages}{33--80}
  (\bibinfo{year}{2011}).
\newblock \urlprefix\url{https://link.aps.org/doi/10.1103/RevModPhys.83.33}.

\bibitem{ATL15}
\bibinfo{author}{Azuma, K.}, \bibinfo{author}{Tamaki, K.} \&
  \bibinfo{author}{Lo, H.-K.}
\newblock \bibinfo{title}{{All-photonic quantum repeaters}}.
\newblock \emph{\bibinfo{journal}{Nature Communications}}
  \textbf{\bibinfo{volume}{6}}, \bibinfo{pages}{6787} (\bibinfo{year}{2015}).
\newblock \urlprefix\url{https://doi.org/10.1038/ncomms7787}.

\bibitem{YCZ+08}
\bibinfo{author}{Yuan, Z.-S.}, \bibinfo{author}{Chen, Y.-A.},
  \bibinfo{author}{Zhao, B.} \emph{et~al.}
\newblock \bibinfo{title}{{Experimental demonstration of a BDCZ quantum
  repeater node}}.
\newblock \emph{\bibinfo{journal}{Nature}} \textbf{\bibinfo{volume}{454}},
  \bibinfo{pages}{1098--1101} (\bibinfo{year}{2008}).
\newblock \urlprefix\url{https://doi.org/10.1038/nature07241}.

\bibitem{LZY+19}
\bibinfo{author}{Li, Z.-D.}, \bibinfo{author}{Zhang, R.}, \bibinfo{author}{Yin,
  X.-F.} \emph{et~al.}
\newblock \bibinfo{title}{{Experimental quantum repeater without quantum
  memory}}.
\newblock \emph{\bibinfo{journal}{Nature Photonics}}
  \textbf{\bibinfo{volume}{13}}, \bibinfo{pages}{644--648}
  (\bibinfo{year}{2019}).
\newblock \urlprefix\url{https://doi.org/10.1038/s41566-019-0468-5}.

\bibitem{LTM+21}
\bibinfo{author}{Langenfeld, S.}, \bibinfo{author}{Thomas, P.},
  \bibinfo{author}{Morin, O.} \& \bibinfo{author}{Rempe, G.}
\newblock \bibinfo{title}{{Quantum Repeater Node Demonstrating Unconditionally
  Secure Key Distribution}}.
\newblock \emph{\bibinfo{journal}{Physical Review Letters}}
  \textbf{\bibinfo{volume}{126}}, \bibinfo{pages}{230506}
  (\bibinfo{year}{2021}).
\newblock
  \urlprefix\url{https://link.aps.org/doi/10.1103/PhysRevLett.126.230506}.

\bibitem{LCH+18}
\bibinfo{author}{Liao, S.-K.}, \bibinfo{author}{Cai, W.-Q.},
  \bibinfo{author}{Handsteiner, J.} \emph{et~al.}
\newblock \bibinfo{title}{{Satellite-Relayed Intercontinental Quantum
  Network}}.
\newblock \emph{\bibinfo{journal}{Physical Review Letters}}
  \textbf{\bibinfo{volume}{120}}, \bibinfo{pages}{030501}
  (\bibinfo{year}{2018}).
\newblock
  \urlprefix\url{https://link.aps.org/doi/10.1103/PhysRevLett.120.030501}.

\bibitem{CZL+22}
\bibinfo{author}{Chen, J.-P.}, \bibinfo{author}{Zhang, C.},
  \bibinfo{author}{Liu, Y.} \emph{et~al.}
\newblock \bibinfo{title}{{Quantum Key Distribution over 658 km Fiber with
  Distributed Vibration Sensing}}.
\newblock \emph{\bibinfo{journal}{Physical Review Letters}}
  \textbf{\bibinfo{volume}{128}}, \bibinfo{pages}{180502}
  (\bibinfo{year}{2022}).
\newblock
  \urlprefix\url{https://link.aps.org/doi/10.1103/PhysRevLett.128.180502}.

\bibitem{CZC+21}
\bibinfo{author}{Chen, Y.-A.}, \bibinfo{author}{Zhang, Q.},
  \bibinfo{author}{Chen, T.-Y.} \emph{et~al.}
\newblock \bibinfo{title}{{An integrated space-to-ground quantum communication
  network over 4,600 kilometres}}.
\newblock \emph{\bibinfo{journal}{Nature}} \textbf{\bibinfo{volume}{589}},
  \bibinfo{pages}{214--219} (\bibinfo{year}{2021}).
\newblock \urlprefix\url{https://doi.org/10.1038/s41586-020-03093-8}.

\bibitem{S17}
\bibinfo{author}{Simon, C.}
\newblock \bibinfo{title}{{Towards a global quantum network}}.
\newblock \emph{\bibinfo{journal}{Nature Photonics}}
  \textbf{\bibinfo{volume}{11}}, \bibinfo{pages}{678--680}
  (\bibinfo{year}{2017}).
\newblock \urlprefix\url{https://doi.org/10.1038/s41566-017-0032-0}.

\bibitem{HBB99}
\bibinfo{author}{Hillery, M.},
  \bibinfo{author}{Bu{\ifmmode\check{z}\else\v{z}\fi}ek, V.} \&
  \bibinfo{author}{Berthiaume, A.}
\newblock \bibinfo{title}{{Quantum secret sharing}}.
\newblock \emph{\bibinfo{journal}{Physical Review A}}
  \textbf{\bibinfo{volume}{59}}, \bibinfo{pages}{1829--1834}
  (\bibinfo{year}{1999}).
\newblock \urlprefix\url{https://link.aps.org/doi/10.1103/PhysRevA.59.1829}.

\bibitem{GKB+07}
\bibinfo{author}{Gaertner, S.}, \bibinfo{author}{Kurtsiefer, C.},
  \bibinfo{author}{Bourennane, M.} \emph{et~al.}
\newblock \bibinfo{title}{{Experimental Demonstration of Four-Party Quantum
  Secret Sharing}}.
\newblock \emph{\bibinfo{journal}{Physical Review Letters}}
  \textbf{\bibinfo{volume}{98}}, \bibinfo{pages}{020503}
  (\bibinfo{year}{2007}).
\newblock
  \urlprefix\url{https://link.aps.org/doi/10.1103/PhysRevLett.98.020503}.

\bibitem{TZG01}
\bibinfo{author}{Tittel, W.}, \bibinfo{author}{Zbinden, H.} \&
  \bibinfo{author}{Gisin, N.}
\newblock \bibinfo{title}{{Experimental demonstration of quantum secret
  sharing}}.
\newblock \emph{\bibinfo{journal}{Physical Review A}}
  \textbf{\bibinfo{volume}{63}}, \bibinfo{pages}{042301}
  (\bibinfo{year}{2001}).
\newblock \urlprefix\url{https://link.aps.org/doi/10.1103/PhysRevA.63.042301}.

\bibitem{XLD+04}
\bibinfo{author}{Xiao, L.}, \bibinfo{author}{Lu~Long, G.},
  \bibinfo{author}{Deng, F.-G.} \emph{et~al.}
\newblock \bibinfo{title}{{Efficient multiparty quantum-secret-sharing
  schemes}}.
\newblock \emph{\bibinfo{journal}{Physical Review A}}
  \textbf{\bibinfo{volume}{69}}, \bibinfo{pages}{052307}
  (\bibinfo{year}{2004}).
\newblock \urlprefix\url{https://link.aps.org/doi/10.1103/PhysRevA.69.052307}.

\bibitem{HZB+06}
\bibinfo{author}{Hillery, M.}, \bibinfo{author}{Ziman, M.},
  \bibinfo{author}{Bu{\ifmmode\check{z}\else\v{z}\fi}ek, V.} \emph{et~al.}
\newblock \bibinfo{title}{{Towards quantum-based privacy and voting}}.
\newblock \emph{\bibinfo{journal}{Physics Letters A}}
  \textbf{\bibinfo{volume}{349}}, \bibinfo{pages}{75--81}
  (\bibinfo{year}{2006}).
\newblock
  \urlprefix\url{https://www.sciencedirect.com/science/article/pii/S0375960105014738}.

\bibitem{TSP16}
\bibinfo{author}{Thapliyal, K.}, \bibinfo{author}{Sharma, R.~D.} \&
  \bibinfo{author}{Pathak, A.}
\newblock \bibinfo{title}{{Protocols for quantum binary voting}}.
\newblock \emph{\bibinfo{journal}{International Journal of Quantum
  Information}} \textbf{\bibinfo{volume}{15}}, \bibinfo{pages}{1750007}
  (\bibinfo{year}{2016}).
\newblock \urlprefix\url{https://doi.org/10.1142/S0219749917500071}.

\bibitem{MTP+22}
\bibinfo{author}{Mishra, S.}, \bibinfo{author}{Thapliyal, K.},
  \bibinfo{author}{Parakh, A.} \& \bibinfo{author}{Patha, A.}
\newblock \bibinfo{title}{{Quantum anonymous veto: a set of new protocols}}.
\newblock \emph{\bibinfo{journal}{EPJ Quantum Technology}}
  \textbf{\bibinfo{volume}{9}}, \bibinfo{pages}{14} (\bibinfo{year}{2022}).
\newblock \urlprefix\url{https://doi.org/10.1140/epjqt/s40507-022-00133-2}.

\bibitem{MGK+20}
\bibinfo{author}{Murta, G.}, \bibinfo{author}{Grasselli, F.},
  \bibinfo{author}{Kampermann, H.} \emph{et~al.}
\newblock \bibinfo{title}{{Quantum Conference Key Agreement: A Review}}.
\newblock \emph{\bibinfo{journal}{Adv. Quantum Technol.}}
  \textbf{\bibinfo{volume}{3}}, \bibinfo{pages}{2000025}
  (\bibinfo{year}{2020}).
\newblock \urlprefix\url{https://doi.org/10.1002/qute.202000025}.

\bibitem{PHG+21}
\bibinfo{author}{Proietti, M.}, \bibinfo{author}{Ho, J.},
  \bibinfo{author}{Grasselli, F.} \emph{et~al.}
\newblock \bibinfo{title}{{Experimental quantum conference key agreement}}.
\newblock \emph{\bibinfo{journal}{Sci. Adv.}} \textbf{\bibinfo{volume}{7}},
  \bibinfo{pages}{eabe0395} (\bibinfo{year}{2021}).
\newblock \urlprefix\url{https://doi.org/10.1126/sciadv.abe0395}.

\bibitem{HJP20}
\bibinfo{author}{Hahn, F.}, \bibinfo{author}{de~Jong, J.} \&
  \bibinfo{author}{Pappa, A.}
\newblock \bibinfo{title}{{Anonymous Quantum Conference Key Agreement}}.
\newblock \emph{\bibinfo{journal}{PRX Quantum}} \textbf{\bibinfo{volume}{1}},
  \bibinfo{pages}{020325} (\bibinfo{year}{2020}).
\newblock \urlprefix\url{https://link.aps.org/doi/10.1103/PRXQuantum.1.020325}.

\bibitem{BTS+18}
\bibinfo{author}{Banerjee, A.}, \bibinfo{author}{Thapliyal, K.},
  \bibinfo{author}{Shukla, C.} \& \bibinfo{author}{Pathak, A.}
\newblock \bibinfo{title}{{Quantum conference}}.
\newblock \emph{\bibinfo{journal}{Quantum Information Processing}}
  \textbf{\bibinfo{volume}{17}}, \bibinfo{pages}{161} (\bibinfo{year}{2018}).
\newblock \urlprefix\url{https://doi.org/10.1007/s11128-018-1931-9}.

\bibitem{LLY+19}
\bibinfo{author}{Lu, H.} \emph{et~al.}
\newblock \bibinfo{title}{{Experimental quantum network coding}}.
\newblock \emph{\bibinfo{journal}{npj Quantum Information}}
  \textbf{\bibinfo{volume}{5}}, \bibinfo{pages}{89} (\bibinfo{year}{2019}).
\newblock \urlprefix\url{https://doi.org/10.1038/s41534-019-0207-2}.

\bibitem{MMG19}
\bibinfo{author}{Meignant, C.}, \bibinfo{author}{Markham, D.} \&
  \bibinfo{author}{Grosshans, F.}
\newblock \bibinfo{title}{{Distributing graph states over arbitrary quantum
  networks}}.
\newblock \emph{\bibinfo{journal}{Physical Review A}}
  \textbf{\bibinfo{volume}{100}}, \bibinfo{pages}{052333}
  (\bibinfo{year}{2019}).
\newblock \urlprefix\url{https://doi.org/10.1103/PhysRevA.100.052333}.

\bibitem{BSK+20}
\bibinfo{author}{B{\ifmmode\ddot{a}\else\"{a}\fi}uml, S.},
  \bibinfo{author}{Azuma, K.}, \bibinfo{author}{Kato, G.} \emph{et~al.}
\newblock \bibinfo{title}{{Linear programs for entanglement and key
  distribution in the quantum internet}}.
\newblock \emph{\bibinfo{journal}{Communications Physics}}
  \textbf{\bibinfo{volume}{3}}, \bibinfo{pages}{55} (\bibinfo{year}{2020}).
\newblock \urlprefix\url{https://doi.org/10.1038/s42005-020-0318-2}.

\bibitem{DHW20}
\bibinfo{author}{Dahlberg, A.}, \bibinfo{author}{Helsen, J.} \&
  \bibinfo{author}{Wehner, S.}
\newblock \bibinfo{title}{{Transforming graph states to Bell-pairs is
  NP-Complete}}.
\newblock \emph{\bibinfo{journal}{Quantum}} \textbf{\bibinfo{volume}{4}},
  \bibinfo{pages}{348} (\bibinfo{year}{2020}).
\newblock \urlprefix\url{https://doi.org/10.22331/q-2020-10-22-348}.

\bibitem{BR03}
\bibinfo{author}{Buhrman, H.} \&
  \bibinfo{author}{R{\ifmmode\ddot{o}\else\"{o}\fi}hrig, H.}
\newblock \bibinfo{title}{{Distributed Quantum Computing}}.
\newblock In \emph{\bibinfo{booktitle}{{Mathematical Foundations of Computer
  Science 2003}}}, \bibinfo{pages}{1--20} (\bibinfo{publisher}{Springer},
  \bibinfo{address}{Berlin, Germany}, \bibinfo{year}{2003}).
\newblock
  \urlprefix\url{https://link.springer.com/chapter/10.1007/978-3-540-45138-9_1}.

\bibitem{LBK05}
\bibinfo{author}{Lim, Y.~L.}, \bibinfo{author}{Beige, A.} \&
  \bibinfo{author}{Kwek, L.~C.}
\newblock \bibinfo{title}{{Repeat-Until-Success Linear Optics Distributed
  Quantum Computing}}.
\newblock \emph{\bibinfo{journal}{Physical Review Letters}}
  \textbf{\bibinfo{volume}{95}}, \bibinfo{pages}{030505}
  (\bibinfo{year}{2005}).
\newblock
  \urlprefix\url{https://link.aps.org/doi/10.1103/PhysRevLett.95.030505}.

\bibitem{C00}
\bibinfo{author}{Chuang, I.~L.}
\newblock \bibinfo{title}{{Quantum Algorithm for Distributed Clock
  Synchronization}}.
\newblock \emph{\bibinfo{journal}{Physical Review Letters}}
  \textbf{\bibinfo{volume}{85}}, \bibinfo{pages}{2006--2009}
  (\bibinfo{year}{2000}).
\newblock \urlprefix\url{https://link.aps.org/doi/10.1103/PhysRevLett.85.2006}.

\bibitem{JAD+00}
\bibinfo{author}{Jozsa, R.}, \bibinfo{author}{Abrams, D.~S.},
  \bibinfo{author}{Dowling, J.~P.} \emph{et~al.}
\newblock \bibinfo{title}{{Quantum Clock Synchronization Based on Shared Prior
  Entanglement}}.
\newblock \emph{\bibinfo{journal}{Physical Review Letters}}
  \textbf{\bibinfo{volume}{85}}, \bibinfo{pages}{2010--2013}
  (\bibinfo{year}{2000}).
\newblock \urlprefix\url{https://link.aps.org/doi/10.1103/PhysRevLett.85.2010}.

\bibitem{GLM01}
\bibinfo{author}{Giovannetti, V.}, \bibinfo{author}{Lloyd, S.} \&
  \bibinfo{author}{Maccone, L.}
\newblock \bibinfo{title}{{Quantum-enhanced positioning and clock
  synchronization}}.
\newblock \emph{\bibinfo{journal}{Nature}} \textbf{\bibinfo{volume}{412}},
  \bibinfo{pages}{417--419} (\bibinfo{year}{2001}).
\newblock \urlprefix\url{https://doi.org/10.1038/35086525}.

\bibitem{NSN+22}
\bibinfo{author}{Nichol, B.~C.} \emph{et~al.}
\newblock \bibinfo{title}{{An elementary quantum network of entangled optical
  atomic clocks}}.
\newblock \emph{\bibinfo{journal}{Nature}} \textbf{\bibinfo{volume}{609}},
  \bibinfo{pages}{689--694} (\bibinfo{year}{2022}).
\newblock \urlprefix\url{https://doi.org/10.1038/s41586-022-05088-z}.

\bibitem{LLC21}
\bibinfo{author}{Li, C.}, \bibinfo{author}{Li, T.}, \bibinfo{author}{Liu,
  Y.-X.} \emph{et~al.}
\newblock \bibinfo{title}{{Effective routing design for remote entanglement
  generation on quantum networks}}.
\newblock \emph{\bibinfo{journal}{npj Quantum Information}}
  \textbf{\bibinfo{volume}{7}}, \bibinfo{pages}{10} (\bibinfo{year}{2021}).
\newblock \urlprefix\url{https://doi.org/10.1038/s41534-020-00344-4}.

\bibitem{MBF10}
\bibinfo{author}{Matsuzaki, Y.}, \bibinfo{author}{Benjamin, S.~C.} \&
  \bibinfo{author}{Fitzsimons, J.}
\newblock \bibinfo{title}{{Probabilistic Growth of Large Entangled States with
  Low Error Accumulation}}.
\newblock \emph{\bibinfo{journal}{Physical Review Letters}}
  \textbf{\bibinfo{volume}{104}}, \bibinfo{pages}{050501}
  (\bibinfo{year}{2010}).
\newblock \urlprefix\url{https://doi.org/10.1103/PhysRevLett.104.050501}.

\bibitem{PPE+22}
\bibinfo{author}{Patil, A.}, \bibinfo{author}{Pant, M.},
  \bibinfo{author}{Englund, D.} \emph{et~al.}
\newblock \bibinfo{title}{{Entanglement generation in a quantum network at
  distance-independent rate}}.
\newblock \emph{\bibinfo{journal}{npj Quantum Information}}
  \textbf{\bibinfo{volume}{8}}, \bibinfo{pages}{51} (\bibinfo{year}{2022}).
\newblock \urlprefix\url{https://doi.org/10.1038/s41534-022-00536-0}.

\bibitem{PKT+19}
\bibinfo{author}{Pant, M.}, \bibinfo{author}{Krovi, H.},
  \bibinfo{author}{Towsley, D.} \emph{et~al.}
\newblock \bibinfo{title}{{Routing entanglement in the quantum internet}}.
\newblock \emph{\bibinfo{journal}{npj Quantum Information}}
  \textbf{\bibinfo{volume}{5}}, \bibinfo{pages}{25} (\bibinfo{year}{2019}).
\newblock \urlprefix\url{https://doi.org/10.1038/s41534-019-0139-x}.

\bibitem{HKM+18}
\bibinfo{author}{Humphreys, P.~C.}, \bibinfo{author}{Kalb, N.},
  \bibinfo{author}{Morits, J. P.~J.} \emph{et~al.}
\newblock \bibinfo{title}{{Deterministic delivery of remote entanglement on a
  quantum network}}.
\newblock \emph{\bibinfo{journal}{Nature}} \textbf{\bibinfo{volume}{558}},
  \bibinfo{pages}{268--273} (\bibinfo{year}{2018}).
\newblock \urlprefix\url{https://doi.org/10.1038/s41586-018-0200-5}.

\bibitem{KRH+17}
\bibinfo{author}{Kalb, N.}, \bibinfo{author}{Reiserer, A.~A.},
  \bibinfo{author}{Humphreys, P.~C.} \emph{et~al.}
\newblock \bibinfo{title}{{Entanglement distillation between solid-state
  quantum network nodes}}.
\newblock \emph{\bibinfo{journal}{Science}} \textbf{\bibinfo{volume}{356}},
  \bibinfo{pages}{928--932} (\bibinfo{year}{2017}).
\newblock \urlprefix\url{https://doi.org/10.1126/science.aan0070}.

\bibitem{ZCB+21}
\bibinfo{author}{Zhong, Y.}, \bibinfo{author}{Chang, H.-S.},
  \bibinfo{author}{Bienfait, A.} \emph{et~al.}
\newblock \bibinfo{title}{{Deterministic multi-qubit entanglement in a quantum
  network}}.
\newblock \emph{\bibinfo{journal}{Nature}} \textbf{\bibinfo{volume}{590}},
  \bibinfo{pages}{571--575} (\bibinfo{year}{2021}).
\newblock \urlprefix\url{https://doi.org/10.1038/s41586-021-03288-7}.

\bibitem{SNN+20}
\bibinfo{author}{Stephenson, L.~J.} \emph{et~al.}
\newblock \bibinfo{title}{{High-Rate, High-Fidelity Entanglement of Qubits
  Across an Elementary Quantum Network}}.
\newblock \emph{\bibinfo{journal}{Physical Review Letters}}
  \textbf{\bibinfo{volume}{124}}, \bibinfo{pages}{110501}
  (\bibinfo{year}{2020}).
\newblock \urlprefix\url{https://doi.org/10.1103/PhysRevLett.124.110501}.

\bibitem{S03}
\bibinfo{author}{Schlingemann, D.}
\newblock \bibinfo{title}{{Cluster states, algorithms and graphs}}.
\newblock \emph{\bibinfo{journal}{ArXiv e-prints}}  (\bibinfo{year}{2003}).
\newblock \urlprefix\url{https://doi.org/10.48550/arXiv.quant-ph/0305170}.
\newblock \eprint{quant-ph/0305170}.

\bibitem{HEB04}
\bibinfo{author}{Hein, M.}, \bibinfo{author}{Eisert, J.} \&
  \bibinfo{author}{Briegel, H.~J.}
\newblock \bibinfo{title}{{Multiparty entanglement in graph states}}.
\newblock \emph{\bibinfo{journal}{Physical Review A}}
  \textbf{\bibinfo{volume}{69}}, \bibinfo{pages}{062311}
  (\bibinfo{year}{2004}).
\newblock \urlprefix\url{https://doi.org/10.1103/PhysRevA.69.062311}.

\bibitem{SM20}
\bibinfo{author}{Shettell, N.} \& \bibinfo{author}{Markham, D.}
\newblock \bibinfo{title}{{Graph States as a Resource for Quantum Metrology}}.
\newblock \emph{\bibinfo{journal}{Physical Review Letters}}
  \textbf{\bibinfo{volume}{124}}, \bibinfo{pages}{110502}
  (\bibinfo{year}{2020}).
\newblock \urlprefix\url{https://doi.org/10.1103/PhysRevLett.124.110502}.

\bibitem{SW01}
\bibinfo{author}{Schlingemann, D.} \& \bibinfo{author}{Werner, R.~F.}
\newblock \bibinfo{title}{{Quantum error-correcting codes associated with
  graphs}}.
\newblock \emph{\bibinfo{journal}{Physical Review A}}
  \textbf{\bibinfo{volume}{65}}, \bibinfo{pages}{012308}
  (\bibinfo{year}{2001}).
\newblock \urlprefix\url{https://doi.org/10.1103/PhysRevA.65.012308}.

\bibitem{RB01}
\bibinfo{author}{Raussendorf, R.} \& \bibinfo{author}{Briegel, H.~J.}
\newblock \bibinfo{title}{{A One-Way Quantum Computer}}
  \textbf{\bibinfo{volume}{86}}, \bibinfo{pages}{5188--5191}
  (\bibinfo{year}{2001}).
\newblock \urlprefix\url{https://doi.org/10.1103/PhysRevLett.86.5188}.

\bibitem{SS19}
\bibinfo{author}{Shenoy, A.~H.} \& \bibinfo{author}{Srikanth, R.}
\newblock \bibinfo{title}{{Maximally nonlocal subspaces}}.
\newblock \emph{\bibinfo{journal}{Journal of Physics A: Mathematical and
  Theoretical}} \textbf{\bibinfo{volume}{52}}, \bibinfo{pages}{095302}
  (\bibinfo{year}{2019}).
\newblock \urlprefix\url{https://doi.org/10.1088/1751-8121/ab0046}.

\bibitem{AMD+20}
\bibinfo{author}{Adcock, J.~C.}, \bibinfo{author}{Morley-Short, S.},
  \bibinfo{author}{Dahlberg, A.} \emph{et~al.}
\newblock \bibinfo{title}{{Mapping graph state orbits under local
  complementation}}.
\newblock \emph{\bibinfo{journal}{Quantum}} \textbf{\bibinfo{volume}{4}},
  \bibinfo{pages}{305} (\bibinfo{year}{2020}).
\newblock \urlprefix\url{https://doi.org/10.22331/q-2020-08-07-305}.

\bibitem{EKB16}
\bibinfo{author}{Epping, M.}, \bibinfo{author}{Kampermann, H.} \&
  \bibinfo{author}{Bru{\ss}, D.}
\newblock \bibinfo{title}{{Large-scale quantum networks based on graphs}}.
\newblock \emph{\bibinfo{journal}{New Journal of Physics}}
  \textbf{\bibinfo{volume}{18}}, \bibinfo{pages}{053036}
  (\bibinfo{year}{2016}).
\newblock \urlprefix\url{https://doi.org/10.1088/1367-2630/18/5/053036}.

\bibitem{PD19}
\bibinfo{author}{Pirker, A.} \&
  \bibinfo{author}{D{\ifmmode\ddot{u}\else\"{u}\fi}r, W.}
\newblock \bibinfo{title}{{A quantum network stack and protocols for reliable
  entanglement-based networks}}.
\newblock \emph{\bibinfo{journal}{New Journal of Physics}}
  \textbf{\bibinfo{volume}{21}}, \bibinfo{pages}{033003}
  (\bibinfo{year}{2019}).
\newblock \urlprefix\url{https://doi.org/10.1088/1367-2630/ab05f7}.

\bibitem{CC12}
\bibinfo{author}{Cuquet, M.} \& \bibinfo{author}{Calsamiglia, J.}
\newblock \bibinfo{title}{{Growth of graph states in quantum networks}}.
\newblock \emph{\bibinfo{journal}{Physical Review A}}
  \textbf{\bibinfo{volume}{86}}, \bibinfo{pages}{042304}
  (\bibinfo{year}{2012}).
\newblock \urlprefix\url{https://doi.org/10.1103/PhysRevA.86.042304}.

\bibitem{PWD18}
\bibinfo{author}{Pirker, A.},
  \bibinfo{author}{Walln{\ifmmode\ddot{o}\else\"{o}\fi}fer, J.} \&
  \bibinfo{author}{D{\ifmmode\ddot{u}\else\"{u}\fi}r, W.}
\newblock \bibinfo{title}{{Modular architectures for quantum networks}}.
\newblock \emph{\bibinfo{journal}{New Journal of Physics}}
  \textbf{\bibinfo{volume}{20}}, \bibinfo{pages}{053054}
  (\bibinfo{year}{2018}).
\newblock \urlprefix\url{https://doi.org/10.1088/1367-2630/aac2aa}.

\bibitem{HPE19}
\bibinfo{author}{Hahn, F.}, \bibinfo{author}{Pappa, A.} \&
  \bibinfo{author}{Eisert, J.}
\newblock \bibinfo{title}{{Quantum network routing and local complementation}}.
\newblock \emph{\bibinfo{journal}{npj Quantum Information}}
  \textbf{\bibinfo{volume}{5}}, \bibinfo{pages}{76} (\bibinfo{year}{2019}).
\newblock \urlprefix\url{https://doi.org/10.1038/s41534-019-0191-6}.

\bibitem{SMI+16}
\bibinfo{author}{Schoute, E.}, \bibinfo{author}{Mancinska, L.},
  \bibinfo{author}{Islam, T.} \emph{et~al.}
\newblock \bibinfo{title}{{Shortcuts to quantum network routing}}.
\newblock \emph{\bibinfo{journal}{ArXiv e-prints}}  (\bibinfo{year}{2016}).
\newblock \urlprefix\url{https://doi.org/10.48550/arXiv.1610.05238}.
\newblock \eprint{1610.05238}.

\bibitem{MOA79}
\bibinfo{author}{Marshall, A.~W.}, \bibinfo{author}{Olkin, I.} \&
  \bibinfo{author}{Arnold, B.~C.}
\newblock \emph{\bibinfo{title}{{Inequalities: Theory of Majorization and Its
  Applications}}} (\bibinfo{publisher}{Springer}, \bibinfo{address}{New York,
  NY, USA}, \bibinfo{year}{2011}).
\newblock
  \urlprefix\url{https://link.springer.com/book/10.1007/978-0-387-68276-1}.

\bibitem{VDD04}
\bibinfo{author}{Van~den Nest, M.}, \bibinfo{author}{Dehaene, J.} \&
  \bibinfo{author}{De~Moor, B.}
\newblock \bibinfo{title}{{Graphical description of the action of local
  Clifford transformations on graph states}}.
\newblock \emph{\bibinfo{journal}{Physical Review A}}
  \textbf{\bibinfo{volume}{69}}, \bibinfo{pages}{022316}
  (\bibinfo{year}{2004}).
\newblock \urlprefix\url{https://link.aps.org/doi/10.1103/PhysRevA.69.022316}.

\bibitem{HDE+22}
\bibinfo{author}{Hahn, F.}, \bibinfo{author}{Dahlberg, A.},
  \bibinfo{author}{Eisert, J.} \& \bibinfo{author}{Pappa, A.}
\newblock \bibinfo{title}{{Limitations of nearest-neighbor quantum networks}}.
\newblock \emph{\bibinfo{journal}{Physical Review A}}
  \textbf{\bibinfo{volume}{106}}, \bibinfo{pages}{L010401}
  (\bibinfo{year}{2022}).
\newblock
  \urlprefix\url{https://link.aps.org/doi/10.1103/PhysRevA.106.L010401}.

\bibitem{LOW10}
\bibinfo{author}{Leung, D.}, \bibinfo{author}{Oppenheim, J.} \&
  \bibinfo{author}{Winter, A.}
\newblock \bibinfo{title}{{Quantum Network
  Communication{\ifmmode---\else\textemdash\fi}The Butterfly and Beyond}}.
\newblock \emph{\bibinfo{journal}{IEEE Transactions on Information Theory}}
  \textbf{\bibinfo{volume}{56}}, \bibinfo{pages}{3478--3490}
  (\bibinfo{year}{2010}).
\newblock \urlprefix\url{https://doi.org/10.1109/TIT.2010.2048442}.

\bibitem{N99}
\bibinfo{author}{Nielsen, M.~A.}
\newblock \bibinfo{title}{{Conditions for a Class of Entanglement
  Transformations}}.
\newblock \emph{\bibinfo{journal}{Physical Review Letters}}
  \textbf{\bibinfo{volume}{83}}, \bibinfo{pages}{436} (\bibinfo{year}{1999}).
\newblock \urlprefix\url{https://link.aps.org/doi/10.1103/PhysRevLett.83.436}.

\bibitem{GMN+15}
\bibinfo{author}{Gour, G.},
  \bibinfo{author}{M{\ifmmode\ddot{u}\else\"{u}\fi}ller, M.~P.},
  \bibinfo{author}{Narasimhachar, V.}, \bibinfo{author}{Spekkens, R.~W.} \&
  \bibinfo{author}{Halpern, N.~Y.}
\newblock \bibinfo{title}{{The resource theory of informational nonequilibrium
  in thermodynamics}}.
\newblock \emph{\bibinfo{journal}{Physics Reports}}
  \textbf{\bibinfo{volume}{583}}, \bibinfo{pages}{1--58}
  (\bibinfo{year}{2015}).
\newblock
  \urlprefix\url{https://www.sciencedirect.com/science/article/pii/S037015731500229X}.

\bibitem{HIM+19}
\bibinfo{author}{Hasegawa, Y.} \emph{et~al.}
\newblock \bibinfo{title}{{Experimental time-reversed adaptive Bell measurement
  towards all-photonic quantum repeaters}}.
\newblock \emph{\bibinfo{journal}{Nature Communications}}
  \textbf{\bibinfo{volume}{10}}, \bibinfo{pages}{378} (\bibinfo{year}{2019}).
\newblock \urlprefix\url{https://doi.org/10.1038/s41467-018-08099-5}.

\bibitem{HBE21}
\bibinfo{author}{Hilaire, P.}, \bibinfo{author}{Barnes, E.} \&
  \bibinfo{author}{Economou, S.~E.}
\newblock \bibinfo{title}{{Resource requirements for efficient quantum
  communication using all-photonic graph states generated from a few matter
  qubits}}.
\newblock \emph{\bibinfo{journal}{Quantum}} \textbf{\bibinfo{volume}{5}},
  \bibinfo{pages}{397} (\bibinfo{year}{2021}).
\newblock \urlprefix\url{https://doi.org/10.22331/q-2021-02-15-397}.
\newblock \eprint{2005.07198v4}.

\bibitem{JHE+22}
\bibinfo{author}{de~Jong, J.}, \bibinfo{author}{Hahn, F.},
  \bibinfo{author}{Eisert, J.}, \bibinfo{author}{Walk, N.} \&
  \bibinfo{author}{Pappa, A.}
\newblock \bibinfo{title}{{Anonymous conference key agreement in linear quantum
  networks}}.
\newblock \emph{\bibinfo{journal}{ArXiv e-prints}}  (\bibinfo{year}{2022}).
\newblock \urlprefix\url{https://doi.org/10.48550/arXiv.2205.09169}.
\newblock \eprint{2205.09169}.

\end{thebibliography}

\end{document}